\documentclass[12pt, draftclsnofoot, onecolumn]{IEEEtran}
\textheight=8.6in

\usepackage{graphicx}
\usepackage{amsthm}
\usepackage{epsfig}
\usepackage{latexsym}
\usepackage{amsfonts}
\usepackage{here}
\usepackage{bbm}
\usepackage{rawfonts}
\usepackage[latin1]{inputenc}
\usepackage[T1]{fontenc}
\usepackage{calc}
\usepackage{capitalgreekitalic}
\usepackage{url}
\usepackage{enumerate}
\usepackage{color}
\usepackage[tbtags]{amsmath}
\usepackage{amssymb}
\usepackage{upref}
\usepackage{epic,eepic}
\usepackage{times}
\usepackage{dsfont}
\usepackage{comment}
\usepackage{color}
\usepackage{cite}
\usepackage{caption}
\usepackage{epstopdf}
\usepackage{cases}











\newtheorem{theorem}{\bf Theorem}
\newtheorem{proposition}{\bf Proposition}

\newtheorem{definition}{\bf Definition}
\newtheorem{remark}{\bf Remark}



\newcounter{step}
\newlength{\totlinewidth}
  {\end{list}%
  \rule{\linewidth}{1pt}}
\newcounter{substep}

  {\end{list}}

\newlength{\aligntop}
\setlength{\aligntop}{-0.53em}
\newlength{\alignbot}
\setlength{\alignbot}{-0.85\baselineskip}
\addtolength{\alignbot}{-0.1em} \makeatletter
\renewenvironment{align}{%
  \vspace{\aligntop}
  \start@align\@ne\st@rredfalse\m@ne
}{%
  \math@cr \black@\totwidth@
  \egroup
  \ifingather@
    \restorealignstate@
    \egroup
    \nonumber
    \ifnum0=`{\fi\iffalse}\fi
  \else
    $$%
  \fi
  \ignorespacesafterend%
  \vspace{\alignbot}\par\noindent
} \makeatother

\IEEEoverridecommandlockouts

\newtheorem{corollary}{Corollary}



\begin{document}
\clearpage
\title{\huge The 5G Cellular Backhaul Management Dilemma: To Cache or to Serve\vspace{-0.3cm}}
%

\author{{Kenza Hamidouche$^{1,2}$}, Walid Saad$^{2}$, M\'erouane Debbah$^{1,3}$, Ju Bin Song$^{4}$, and Choong Seon Hong$^{4}$\vspace*{0em}\\
\authorblockA{\small $^{1}$ CentraleSup\'elec, Universit\'e Paris-Saclay, Gif-sur-Yvette,France, Email: \protect\url{kenza.hamidouche@centralesupelec.fr}\\
$^{2}$Wireless@VT, Bradley Department of Electrical and Computer Engineering, Virginia Tech, Blacksburg, VA, USA, Email: \protect\url{walids@vt.edu} \\
$^{3}$Mathematical and Algorithmic Sciences Lab, Huawei France R\&D, France, Email: \protect\url{merouane.debbah@huawei.com}\\
$^{4}$Dept. of Computer Engineering, Kyung Hee University, South Korea, Emails: \protect\url{{jsong, cshong}@khu.ac.kr}\vspace{-1cm}
}\vspace*{0em}
    \thanks{This research was supported by ERC Starting Grant 305123 MORE and the  U.S. National Science Foundation under Grants CNS-1460316, CNS-1513697, and CNS-1617896.}%
  }
%
%
%
%
\maketitle
\thispagestyle{empty}
\begin{abstract}
With the introduction of caching capabilities into small cell networks (SCNs), new backaul management mechanisms need to be developed to prevent the predicted files that are downloaded by the at the small base stations (SBSs) to be cached from jeopardizing the urgent requests that need to be served via the backhaul. Moreover, these mechanisms must account for the heterogeneity of the backhaul that will be encompassing both wireless backhaul links at various frequency bands and a wired backhaul component. In this paper, the heterogeneous backhaul management problem is formulated as a minority game in which each SBS has to define the number of predicted files to download, without affecting the required transmission rate of the current requests. For the formulated game, it is shown that a unique fair proper mixed Nash equilibrium (PMNE) exists. Self-organizing reinforcement learning algorithm is proposed and proved to converge to a unique Boltzmann-Gibbs equilibrium which approximates the desired PMNE. Simulation results show that the performance of the proposed approach can be close to that of the ideal optimal algorithm while it outperforms a centralized greedy approach in terms of the amount of data that is cached without jeopardizing the quality-of-service of current requests.
\end{abstract}

\smallskip
\noindent \textbf{\emph{Keywords -}} \emph{small cell networks, Caching, heterogeneous backhaul, resource allocation, game theory, reinforcement learning.}


%
\IEEEpeerreviewmaketitle

\section{Introduction}
To cope with the continuously increasing wireless traffic and meet the stringent quality-of-service (QoS) of emerging wireless services, significant changes to modern-day cellular infrastructure are required \cite {CISCO2014}. One promising approach is to deploy small base stations (SBSs) that can provide an effective way to boost the capacity and coverage of wireless networks \cite{andrews2014will}. However, in order to benefit from this deployment of SBSs, several technical challenges must be addressed, in terms of interference management, resource allocation, and more importantly, \emph{backhaul management} \cite{andrews2014will,semiari2015matching,loumiotis2014dynamic}.

Indeed, the short-range and low-power heterogeneous SBSs must be connected to the core network through the backhaul infrastructure of the currently deployed wireless networks \cite{andrews2014will}. However, due to the dense deployment of SBSs coupled with the dramatically increasing traffic, the narrow band of the radio frequency spectrum in the range of $300$ MHz-$3$ GHz has to be shared by a large number of SBSs and used as both backhaul and access links, resulting in a congested backhaul. These capacity limitations of the backhaul links have pushed mobile network operators to exploit the available millimeter wave spectrum even though its deployment is still limited by the blockage and the atmospheric absorption. Thus, depending on the cost for the network operators and the geographical locations of the SBSs, different types of backhaul connections must coexist in 5G systems \cite{andrews2014will}. The types of backhauls that are being considered include a heterogeneous mix of wireless backhauls such as millimeter wave (mmW) and the conventional sub-6 GHz as well as wired connections via cable or fiber optical links \cite{dat2015radio,ghosh2014millimeter}. The use of such heterogeneous backhaul solutions has attracted significant attention in academia and industry recently \cite{andrews2014will,semiari2015matching} and \cite{ghosh2014millimeter}. Thus, if not properly managed, such capacity-limited and heterogeneous backhaul links can lead to significant delays when the SBSs are serving a large number of requests. One of the recently proposed solutions to cope with the backhaul bottleneck in small cell networks (SCNs) is via the use of distributed caching at the cellular network edge \cite{golrezaei2013femtocaching,poularakis2013exploiting,blaszczyszyn2014optimal,Liu2015Exploiting}. Distributed caching in SCNs is based on the premise of equipping SBSs with storage devices as well as exploiting the available storage at the user equipments (UEs) to reduce the load on the backhaul links. In particular, the SBSs can predict user requests for popular content and, then, download this content ahead of time to serve users locally, without using the backhaul.

Different caching solutions 
  for SCNs have been proposed. The authors in \cite{golrezaei2013femtocaching} propose a greedy algorithm that assigns a complete file or an encoded chunk of a file to a given SBS while minimizing the total delay. In \cite{poularakis2013exploiting}, the problem of caching coded segments at the SBSs while taking into account the random mobility of users is addressed. The work in \cite{blaszczyszyn2014optimal} proposes a geographical cache placement algorithm to maximize the probability of serving a user by the SBSs. In \cite{Liu2015Exploiting}, the authors propose a caching strategy that creates MIMO cooperation opportunities between the SBSs. In\cite{poularakis2014approximation}, a joint routing and caching problem is formulated in order to maximize the fraction of content requests served locally by the deployed SBSs. Energy efficiency of cache-enabled networks is analyzed in \cite{Perabathini2015CachingGreen}. Using tools from stochastic geometry, the authors study the conditions under which the area power consumption is minimized with respect to the base station transmit power, while ensuring a certain QoS in terms of coverage probability. Similarly, in\cite{Kumar2015Harvesting}, an online energy efficient
power control scheme is developed for a single energy harvesting SBS
equipped with caching capabilities. The authors in \cite{Zhou2015Multicast} and \cite{poularakis2014multicast} propose new caching approaches while taking into account the multicast opportunities that allow the base stations to serve part of the requests via a single multicast transmission. However, most of these works focus solely on the data being cached without taking into account the fact that such requests will be shared with other requests for data that devices require immediately rather than in the future.

Beyond caching in small cells, we note that there has been considerable works on caching in the computer science community. The idea of caching was initially introduced for central processing units and hard disk drivers and then was extended to web browsers and operating systems \cite{belady1966study}. Different approaches were considered for replacing the cached content such as as removing the least recently used or the least frequently used content \cite{belady1966study}. The closest caching models to the considered one in this paper, is caching in content delivery networks and content centric networks \cite{borst2010distributed,wang2014cache}. The idea consists in storing data at the closest proxy servers of the content delivery networks to the end users, known as the network edge. The aim from this approach is to balance the load over the servers, reduce the bandwidth requirements and thus reduce the users service time \cite{wang1999survey}. The content centric networks rely on the same idea of caching with more intelligent forwarding strategies. Indeed, the content files are identified by name instead of their location, allowing to spread the content all over the Internet network in a smart way \cite{araldo2014cost,ahlgren2012survey}. Recently, the idea of caching was introduced in cellular networks to deal with the capacity-limited backhaul in small cell networks \cite{bacstuug2014living,wang2014cache}. Despite the similarities with caching in the Internet, the network structure of SCNs is significantly different from Internet architecture. Thus, new challenges arise in SCNs such as accounting for channel characteristics and interference, that make the previously proposed approaches for the Internet not applicable, as discussed in \cite{golrezaei2013femtocaching,poularakis2013exploiting,blaszczyszyn2014optimal,Liu2015Exploiting}. This led to the recent emergence of a large literature that aims to address the caching problem while taking into account the specific characteristic of SCNs, as discussed previously. 
  
Moreover, several works \cite{semiari2015matching,loumiotis2014dynamic}, and \cite{liebl2012fair,yi2012backhaul,sengupta2009economic} have addressed the backhaul management problem in order to satisfy the required transmission rate of the SBSs. The main challenge is determine the backhaul resource blocks that should be allocated to each demanding SBS allowing the SBSs to satisfy the QoS requirement of their served users. The authors in \cite{semiari2015matching}, propose a backhaul allocation approach using matching theory in order to allocate the required data rate to each SBS while considering mmW backhaul capabilities. In \cite{loumiotis2014dynamic}, an evolutionary game model for dynamic backhaul resource allocation is proposed while taking into account the dynamics of users' traffic. The  authors in \cite{liebl2012fair} propose a fair resource allocation model for the out-band relay backhaul links, enabled with channel aggregation. The aim of this approach is to maximize the throughput fairness among backhaul and access links in LTE-Advanced relay system. In \cite{yi2012backhaul}, a backhaul resource allocation approach is proposed for LTE-Advanced in-band relaying. This  approach optimizes resource partitioning between  relays  and  macro  users,  taking  into  account  both backhaul and access links quality. In \cite{sengupta2009economic}, an economic model is proposed to allow spectrum providers to lease the backhaul resources to different operators dynamically, by using novel pricing mechanisms.

Despite being interesting, the SCN caching strategies proposed in existing works \cite{golrezaei2013femtocaching,poularakis2013exploiting,blaszczyszyn2014optimal,Liu2015Exploiting,poularakis2014approximation,Perabathini2015CachingGreen,Kumar2015Harvesting,Zhou2015Multicast,poularakis2014multicast} \emph{do not consider the impact of downloading predicted files on the other urgent non-predicted files nor do they account for the heterogeneity of the cellular backhaul links}. In fact, in a cache-enabled system, when an SBS receives a request, if it could not predict it in advance and the requested file is not available in its cache, then the request is considered as being urgent and it must be served instantaneously from the backhaul. Meanwhile, the SBS has to also download predicted files in order to be cached for serving locally the upcoming predicted requests. However, due to the limited capacity of the heterogeneous backhaul links, downloading the predicted files in order to be cached can affect the QoS experienced by the users that are served directly through the backhaul. This results from the fact that the capacity of the radio links that connect the SBSs to the UEs is usually higher than the backhaul capacity due to the hyper-dense nature of SCNs \cite{andrews2014will}. On the other hand, existing backhaul allocation approaches such as in \cite{semiari2015matching} and \cite{loumiotis2014dynamic} also do not account for the impact of caching on the required backhaul rate by each SBS. Such existing approaches may allocate backhaul for downloading predicted files whereas shifting the download of these files to off-peak hours can ensure the required transmission rate for serving the current requests. Moreover, due to the uncertainty in the prediction of the requests, the predicted files may or may not be requested by the users in the future, which makes them less critical than actual demands. The impact of this criticality factor on the backhaul usage and users' current requests has indeed been ignored in the existing literature \cite{golrezaei2013femtocaching,poularakis2013exploiting,blaszczyszyn2014optimal,Liu2015Exploiting,poularakis2014approximation,Perabathini2015CachingGreen,Kumar2015Harvesting,Zhou2015Multicast,poularakis2014multicast}.
The differentiation of request types is important for practical scenarios in which the caches should be refreshed over short time periods due to the high popularity fluctuation of the most popular files that is in the order of hours. Moreover, it allows the SBSs to deal with traffic load in offline caching models in which new peaks of traffic might emerge when all the SBSs refresh their caches simultaneously. In addition to the traffic variation, the SBSs have limited computing and communication resources which make it difficult for them to process large amounts of data and thus the caches must be refreshed more frequently. When such online caching policies are used at the SBSs, new backhaul management frameworks should be deployed to define when the predicted files should be download by the SBSs so that this additional traffic does not jeopardize the QoS of the users requesting files that are not cached at the SBSs and need to be served instantaneously.
  
The main contribution of this paper is to propose a novel distributed backhaul management approach in a wireless cellular network having caching capabilities and a heterogeneous backhaul. In particular, we propose a novel framework using which the SBSs can determine the number of predicted files to download at each time stage, without affecting the download rate of the current critical requests. We consider a SCN with different coexisting backhauls including wired links, mmW and sub-6 GHz bands that can only support a limited number of files at each time period. The problem is then formulated as a minority game (MG), in which the SBSs are the players that must decide independently, on the number of predicted files to download while taking into account other SBSs' decisions. We study the properties of the game and prove that there exists a unique fair proper mixed Nash equilibrium (PMNE) in which all SBSs have an equal chance of using the backhaul. Moreover, we propose a self-organizing reinforcement learning (RL) algorithm with incomplete information that allows the SBSs to reach a Boltzmann-Gibbs equilibrium without communicating with one another. Also, we provide a formal proof of the convergence of the RL algorithm to a unique Boltzmann-Gibbs equilibrium which approaches the PMNE in the formulated game. The proposed approach allows the SBSs to take their decisions autonomously and manage the optimization operations locally without coordinating with one another or with a centralized entity. In fact, having such self-organizing SBSs is of high importance in 5G systems due to the high density of SBSs and the capacity-limited backhaul links \cite{andrews2014will,semiari2015matching,loumiotis2014dynamic}. To our knowledge, this is the \emph{first work that jointly considers SCN backhaul management with caching by taking into account the impact of having predicted and current user requests on the backhaul allocation in cache-enabled SCNs}. Simulation results show that the amount of cached data can be 50\% higher compared to the centralized algorithm due to the reduction of information exchange. Moreover, the performance of the proposed algorithm will match the performance of the optimal, ideal centralized algorithm in more than 85\% of the cases, under properly chosen parameters.

The rest of this paper is organized as follows. Section \ref{model} presents the system model. In section \ref{form}, we formulate the problem as an MG and study its properties. In Section \ref{algo}, a distributed RL algorithm is proposed and its convergence to a unique Boltzmann-Gibbs equilibrium is proved. Section \ref{sim} provides the simulation results and Section \ref{conclusion} concludes the paper.
\section{System Model}
\label{model}
Consider a small cell network composed of a set $\mathcal{M}$ of $M$ micro base stations (MBSs) and a set $\mathcal{N}$ of $N$ SBSs. Each SBS can be connected to the MBSs via one or many backhaul links of different types which can be cable, mmW band or sub-6 GHz band. Such heterogeneous backhauls have been proposed recently as a solution to improve SCN performance as discussed in \cite{andrews2014will}. An illustration of the system model is given in Fig.~\ref{figure:sys}. The wireless link is divided into two sets of backhaul resource blocks denoted by $\mathcal{K}_1$ and $\mathcal{K}_2$ for mmW band and sub-6 GHz band, respectively. Then, depending on the required rate by each SBS, the backhaul resource blocks are allocated to the SBSs. The wired link of maximum capacity $C_{\text{max}}$ is assumed to be shared by many SBSs.
\begin{figure}
\centering
\includegraphics[scale=0.5]{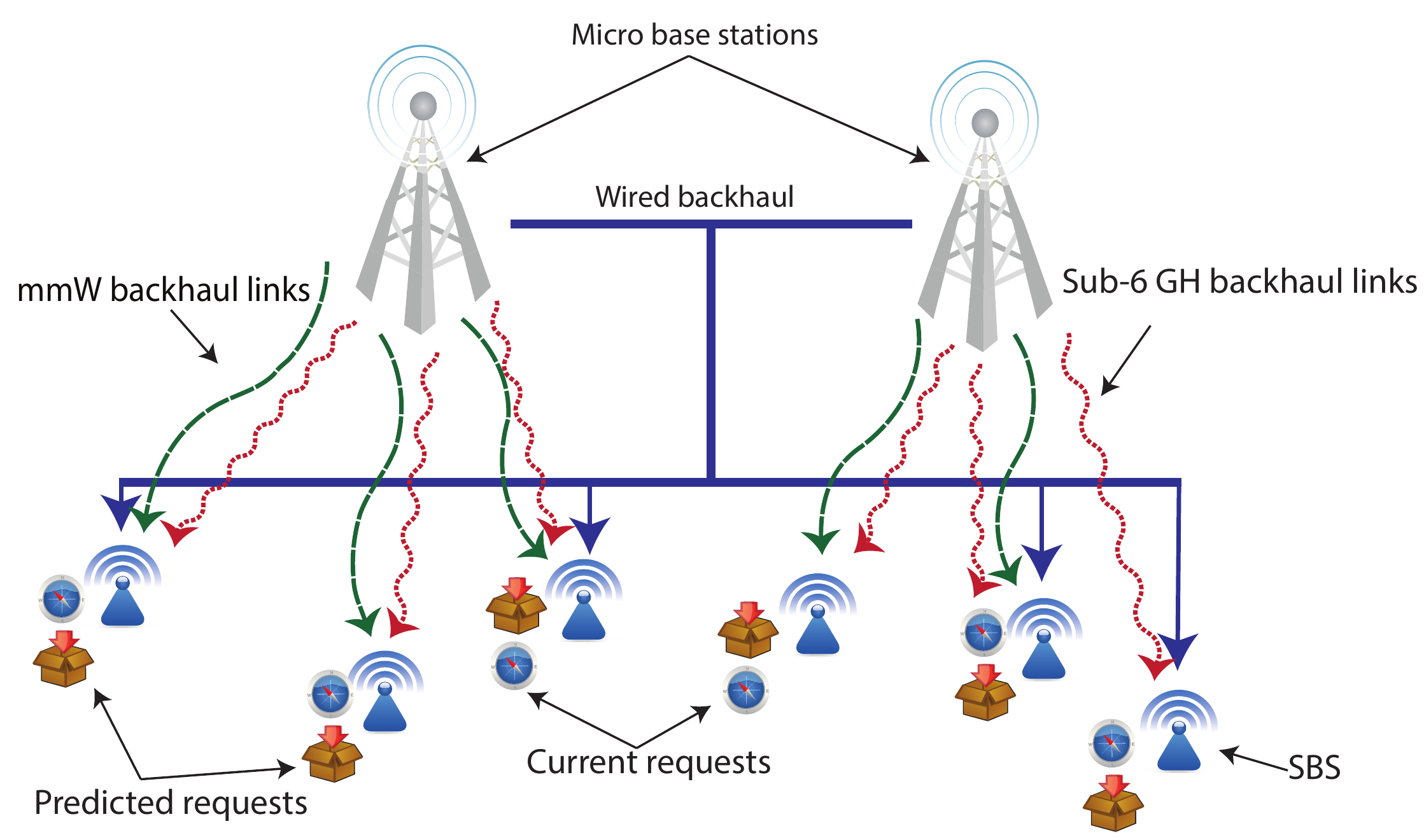}
\caption{System model.\vspace{-0.6cm}}
\label{figure:sys}
\end{figure}
The maximum achievable backhaul rate for a given SBS over the wireless backhaul, is subject to different effects such as interference between the transmitting MBSs when using sub-6 GHz band and atmospheric attenuations when using the mmW band. Indeed, since mmW bands operate at high frequencies, an antenna at a given MBS is able to provide high directional gain and thus the signals do not interfere with one another. However, the transmission rate over the mmW band is limited by rain and atmospheric attenuations as well as the distance between the transmitting MBS and the receiving SBS. For mmW, the path loss is given by\cite{ghosh2014millimeter}:
\begin{equation*}
L_{mn}^{\text{mmW}}=\beta+\alpha 10 \text{log}_{10}(\delta_{mn})+\mathcal{X},
\end{equation*}
where $\alpha$ is the slope of the fit, $\delta$ is the distance between the MBS and the served SBS, $\beta$ is the path loss for 1 meter of distance, and $\mathcal{X}$ is the deviation in fitting which is a Gaussian random variable with zero mean and variance $\zeta^2$. The signal-to-noise ratio (SNR) at a receiving SBS $n$ in the mmW band is given by:
\begin{equation}
\gamma_{mk_1n}=\frac{10 \text{log}_{10} (P_{mk_1n})-L_{mn}^{\text{mmW}}}{N_1},
\end{equation}
where $P_{m1}$ denotes the transmission power of the MBS $m$ serving SBS $n$ over backhaul resource block $k_1\in \mathcal{K}_1$ and $N_1$ is the variance of the receiver's Gaussian noise.
For sub-6 GHz bands, the rate of an SBS is usually limited by the interference experienced from the other transmitting MBSs. The signal-to-interference-plus-noise (SINR) at a receiving SBS $n$ in the sub-6 GHz band is given by:
\begin{equation}
\gamma_{mk_2n}=\frac{P_{mk_2n}|h_{mk_2n}|^2}{N_2+\sum_{i\in\mathcal{M},i\neq m}{{P_{ik_2n}|h_{ik_2n}|^2}}},
\end{equation}
where $P_{m2}$ denotes the transmission power of the MBS $m$ serving SBS $n$ over backhaul resource block $k_2\in \mathcal{K}_2$. In addition, $h_{mk_2j}$ and $N_2$ represent, respectively, the channel state of the link between MBS $m$ and SBS $j$ over backhaul resource blocks $k_2$ and the variance of the receiver's Gaussian noise. For the wired backhaul, even though the transmission is interference-free, the achievable capacity by a given SBS is limited by the number of SBSs that are served using the same link since all the served SBSs share the wired capacity $C_{\text{max}}$.

An SBS is assumed to have \emph{current} requests that is not willing to cache and \emph{predicted} requests that it has to download and cache to serve these popular requests locally without using the backhaul links. The SBSs define the set of predicted files using an underling caching policy that accounts for the available storage space, files popularity and different systems parameters such as users' mobility and SBSs' geographical positions. The set of current requests is composed of all the files that do not belong to the predefined set of predicted files and for which the SBSs receive requests. Hence, downloading the files to serve the predicted requests during high traffic times will affect badly the backhaul rate of the SBS for serving the current requests due to congestion in the wired backhaul or interference in the wireless backhaul. Assume that, at a given time period, an SBS needs a rate $R_{n}$ to serve all the current requests and rate $D_{n}(s_n)$ to download $s_n$ files to serve the predicted requests. In order to serve the requests, a backhaul allocation algorithm is used to assign each backhaul resource block to a given SBS. Without loss of generality, we assume that an algorithm such as the one proposed in \cite{semiari2015matching} for mmW and sub-6 GHz backhaul resource blocks is used in this context. The algorithm results in an assignment of SBSs to the backhaul resource blocks that aims to satisfy the required rate by each SBS. However, the requested rate by each SBS depends on the number of files that each SBS requests. Thus, the output of the backhaul allocation algorithms is a function of the global rate $R=[R_1,...,R_N]$ that is required for serving the current requests, and the requests profile of predicted files requested by the SBSs denoted $\mathcal{F}_c=[\mathcal{F}_{c,1},...,\mathcal{F}_{c,N}]$ of cardinalities $[s_1,...,s_N]$, and is given by a matrix $\boldsymbol{\eta}_k(\mathcal{F}_c,R) \in \{0,1\}^{M\times N}$, for each backhaul resource block $k\in\mathcal{K}\triangleq\mathcal{K}_1\cup\mathcal{K}_2$. An entry $\eta_{mkn}(\mathcal{F}_c,R)$ of the matrix $\boldsymbol{\eta}_{k}(\mathcal{F}_c,R)$ equals 1 if MBS $m$ allocates backhaul resource block $k$ to SBS $n$, and equals 0 otherwise. We use $f_c=\sum_{n\in\mathcal{N}}{s_n}$ to denote the cardinality of the set $\mathcal{F}_c$, which corresponds to the total number of predicted files that all the SBSs decide to download. Given the backhaul resource blocks assignment algorithm, the total achievable backhaul rate for SBS $n$ is given by:
\begin{equation}
r_{n}=\sum_{m\in\mathcal{M}}{\big [c_{mn}(\mathcal{F}_c,R)+\sum_{k\in \mathcal{K}}{\omega_k \text{log}(1+\gamma_{mkn}(\eta_{mkn}(\mathcal{F}_c,R))) }\big]},
\end{equation}
where $\omega_k$ is the bandwidth capacity of backhaul resource block $k$. Since an SBS perceives only the interference from the MBSs transmitting over the same resource blocks, we rewrite the interference as a function of the outcome of the backhaul assignment algorithm $\eta_{mkn}(\mathcal{F}_c,R)$. $c_{mn}$ is the wired allocated capacity by MBS $m$ to SBS $n$. The wired backhaul link's capacity is assumed to be shared between all the SBSs based on the remaining traffic load that could no be served through the wireless backhaul. The allocated wired backhaul by MBS $m$ to SBS $n$ is given by, $c_{mn}=\sigma_{n}(\mathcal{F}_c,R)c^{\prime}_{m}$, where $c^{\prime}_{m}$ is the available wired backhaul capacity at MBS $m$ and $\sigma_{n}(\mathcal{F}_c,R)=\frac{\sum_{f\in\mathcal{F}_{c,n}}q_{f}+R_n}{\sum_{n\in\mathcal{N}}\Big[\sum_{f\in\mathcal{F}_{c,n}}q_{f}+R_n\Big]}$ is the traffic load of SBS $n$ over the total traffic load of all the SBSs, where  $\mathcal{F}_{c,n}$ the set of predicted files that is requested by SBS $n$ and $q_{f}$ the maximum required data rate by an SBS $n$ to serve each file $f\in\mathcal{F}_{c,n}$ which could depend on the type of the application, and the paid price by the users requesting that file for the service. Hence, the backhaul capacity $c_{mn}(\mathcal{F}_c,R)=\sigma_n(\mathcal{F}_c,R)c^{\prime}_m$ that is assigned to each SBS $n$ is proportional to the traffic load of SBS $n$ as compared to the traffic load of other SBSs.

Based on the total capacity of the heterogeneous backhaul and the number of urgent requests, each SBS has to decide, without a direct communication with the other SBSs, on the number of predicted files to download without reducing the transmission rate of the current requests in the network. This problem is formulated in the next section, as a minority game.

\section{Problem Formulation}
\label{form}
The considered problem is characterized by two main properties which are the limited capacity of the backhaul links and the possibility of delaying the predicted requests in cache-enabled small cell networks. Moreover, the achievable reward by the SBSs in such networks is dependent on the chosen actions by all other SBSs. The most suitable tool to can account for all these properties is the class of minority games. In minority games, the SBSs are enforced to cooperate without any coordination between the SBSs.  Such algorithms are important for 5G networks in which the SBSs will be deployed ultra-densely and the capacity-limited backhaul links make it difficult to support any additional coordination load.

\subsection{Backhaul Management Minority Game (BMMG)}
We formulate the problem of backhaul management as a one stage MG, in which the SBSs are the players and each of them has to determine the number of predicted files that must be downloaded from the core network at a given time period, without coordinating with the other SBSs. We consider that, regardless of the traffic load, the SBSs must serve the urgent requests whenever they receive them but they have to decide whether to download or not files that can be cached to serve the predicted requests. Depending on the traffic load, assume that the maximum number of predicted files that can be downloaded at a given time period without affecting the service of the current requests is given by $\phi \in[0,F]$, where $F$ is the cardinality of the set of files  $\mathcal{F}$ from which users can pick their requests. It should be noted that the value of $\phi$ is fixed for the considered time period but can vary from a given time stage to another one. Moreover, the statute of a given request can evolve over time, from a predicted request to an urgent request. Since we consider a one stage MG, there is no need to account for the evolution of the requests as the statute's changes are implicitly considered by defining the static sets of current and predicted requests at each time period. However, we do not make any restriction on the cooperation between the SBSs. Thus, when determining the caching policy, a given file can be divided into small chunks each of which will be cached at a different SBS. Each file chunk will be considered by a SBS as a complete files in our model and will add it to the set of predicted files. In this model, the storage space is allocated more efficiently and a user can be served by multiple SBSs at the same time.

In an MG, each SBS $n$ has to select a strategy $s_n$ from a set $S_n=\{0,1,..,F_n\}$, where $F_n$ corresponds to the number of files for which the SBS $n$ predicts requests and these files must be cached at the SBS. Note that even by caching the predicted files, these files may not be requested by the users in the future which can result in a waste of backhaul capacity if the critical urgent requests are not prioritized.  Moreover, the files in the set of predicted files can become current requests if the SBSs are not able to cache these files before the users request them. In this case, the files are removed from the set of predicted requests and added to set of current requests to serve them instantaneously. The capacity $\phi$ represents the limit starting from which the utility of the players will begin to decrease. Indeed, assuming that all the files have the same size, if the SBSs decide to download a large number of predicted files, this will reduce the allocated backhaul rate per SBS and hence degrade the QoS of the requests that are currently being served from the backhaul, as these urgent files will not be served on time. This is equivalent to deciding on the number of backhaul resource blocks that an SBS needs to use at each time period, as the higher is the number of files an SBS decides to download, the higher is the number of backhaul resource blocks that must be assigned to that SBS. Thus, an SBS delays the service of its own current requests if the total number of predicted files that are requested by the SBSs exceeds $\phi$.

The formulated game is classified as a minority game \cite{challet2013minority}, due the limited number of predicted files that can be supported by the backhaul links, as well as the nature of the SBSs' utility. Essentially, in an MG, players are always better off when they select the action selected by the minority group. The size of the minority group is determined by the maximum system resources that can be allocated to the players. In our context, an SBS would prefer not to request predicted files if more than $\phi$ predicted files are requested by the SBSs, in which case the set of SBSs not requesting files will constitute the minority group. Similarly, the SBSs would prefer to request predicted files if less than $\phi$ files are requested by the other SBSs. The minority group in this case corresponds to the SBSs that choose to request predicted files. The main challenge in this game is that the SBSs do not communicate with one another and if they all think that the backhaul will be congested, none of the SBSs will requests files and the backaul will be underused. On the other hand, if all the SBSs think that the other SBSs will not request predicted files, the backhaul will be congested and the utility of the SBSs decreases.

The utility of an SBS $n$ when it decides to download $s_n$ predicted files, is given by:
\small
\begin{equation}
  \left\{
\begin{aligned}
u_n(s_n,\mathcal{F}_c)=  -R_{n}-D_n(s_n) +\sum_{m\in\mathcal{M}}{\Big (c_{mn}(\mathcal{F}_c,R)} 
+\sum_{k\in \mathcal{K}}{\omega_k \text{log}(1+\gamma_{mkn,t}(\eta_{mkn}(\mathcal{F}_c,R))) }\Big), \text{ if }f_c\geq\phi, \\ 
u_n(s_n,\mathcal{F}_c) = R_{n}+D_n(s_n) -\sum_{m\in\mathcal{M}}{\Big (c_{mn}(\mathcal{F}_c,R)} 
+\sum_{k\in \mathcal{K}}{\omega_k \text{log}(1+\gamma_{mkn,t}( \eta_{mkn}(\mathcal{F}_c,R)))}\Big),\text{ if } f_c\leq\phi,
\label{system}
\end{aligned}
  \right.
\end{equation}
\normalsize
where $f_c$ is the total number of requested predicted files by all the SBSs. This utility represents the difference between the allocated backhaul rate for SBS $n$ and the rate it requires to serve all the current requests and the $s_n$ predicted requests. The required rate for serving the current requests and predicted requests can be given by $R_n=\sum_{f\in\mathcal{F}_n^{\prime}}\frac{L_{f}}{x_f}$ and $D_n(s_n)=\sum_{f=1}^{s_n}\frac{L_{f}}{x_f}$, respectively, where $x_{f}$ is the minimum time during which the request of the $f$th file should be served and $\mathcal{F}^{\prime}_n$ is the set of current files of SBS $n$. We point out here that we do not consider the specific files as we assumed that every SBS defines a priority order for requesting the files from the MBSs. Thus, the set of the first $s_n$ files is unique. The required rate is the fraction between the required time for serving the file and the size of the file $L_f$.

Note that when the maximum backhaul capacity is reached, i.e. $f_c\geq\phi$, the higher is the number of requested files by the SBSs, the lower is the number of assigned backhaul resource blocks and wired capacity to the SBSs. Thus, the utility of a given SBS is a decreasing function of the total number of the requested files by the SBSs. Moreover, in order to avoid underusing the backhaul, the utility obtained by an SBS that chooses not to request predicted files when $f_c \leq \phi$, is defined as an increasing function of the number of requested files until all the backhaul is efficiently allocated, i.e., $f_c=\phi$. The exact value of $\phi$ can be determined based on the backhaul assignment algorithm provided in [3] based on the maximum number of files that are supported by all the SBSs. In fact, since the SBSs always start by caching the files for which the SBSs expect receiving the requests sooner as compared to other requests, then the value of $\phi$ is unique. As shown in (3), it should be noted that the data rate that is achieved by every SBS depends not only on the requested rate by all the SBSs for serving the predicted requests but also on the set of current files that is requested by the SBSs. Thus, the utility of each SBS also depends on the strategies selected by all other SBSs.

Having defined the utility functions,  the goal is to find a solution for the defined game. For this, we distinguish between the pure strategy and proper mixed strategy cases.
\subsubsection{Pure Strategies}
In the pure strategy game, each SBS selects its strategies deterministically, i.e., with probability $1$ or $0$. The \emph{pure Nash equilibrium} is defined as follows \cite{myerson1978refinements}.

\begin{definition}
Let $s_n$  be the strategy selected by SBS $n\in\mathcal{N}$ and $\boldsymbol{s}_{-n}=[s_1,...,s_{n-1},s_{n+1},...,s_{N}]$ the strategy profile of all the other SBSs except SBS $n$. A strategy profile $\boldsymbol{s}^*=[s^*_1,..,s^*_{N}]$ is a \emph{pure Nash equilibrium} (PNE) if:
\begin{equation}
\forall n, s_n\in S_n, u_n(s^*_n,s_{-n}^*)\geq u_n(s_n,s^*_{-n}).
\end{equation}
\end{definition}
In MG literature, results on the existence of PNEs were provided when the number of strategies is the same for all the players and equal to two \cite{challet2013minority,moro2004minority}. However, in the formulated BMMG, each SBS has a larger set of strategies which changes from an SBS to another SBS. For the BMMG, we can derive the following result:
\begin{theorem}
There exists a PNE obtained when the total number of predicted files that are requested by the SBSs at the considered time stage, equals $\phi$.
\end{theorem}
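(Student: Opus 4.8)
The plan is to exploit the single-peaked (unimodal) shape of the utility in \refeq{system}, viewed as a function of the aggregate number of requested files $f_c=\sum_{n\in\mathcal{N}}s_n$, and to show that any strategy profile whose components sum to $\phi$ admits no profitable unilateral deviation. First I would record the structural fact, already motivated physically in the surrounding discussion, that for a fixed opponent profile $\boldsymbol{s}_{-n}$ the map $f_c\mapsto u_n$ is nondecreasing on $\{f_c\le\phi\}$ and nonincreasing on $\{f_c\ge\phi\}$. Indeed, while $f_c\le\phi$ the second branch of \refeq{system} rewards additional requests because the backhaul is under-used; once $f_c\ge\phi$ the first branch penalises them, since the per-SBS share of resource blocks encoded in $\eta_{mkn}(\mathcal{F}_c,R)$ and the proportional wired share $\sigma_n(\mathcal{F}_c,R)c'_m$ both shrink under congestion while the demanded rate $D_n(s_n)$ grows. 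Hence $u_n$ is maximised exactly at $f_c=\phi$, where the two branches meet because the allocated rate then matches the demanded rate and the common value is well defined.

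Second, I would establish feasibility of the candidate equilibrium. Because each $s_n$ ranges over the integer set $\{0,1,\dots,F_n\}$ and the aggregate number $\sum_{n\in\mathcal{N}}F_n$ of predictable files (counting chunks) is at least $\phi$, a profile $\boldsymbol{s}^*=[s_1^*,\dots,s_N^*]$ with $\sum_{n\in\mathcal{N}}s_n^*=\phi$ can always be constructed, for instance by assigning requests greedily across the SBSs until the aggregate reaches $\phi$. Fix any such profile.

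Third, I would verify the pure Nash condition of the preceding Definition directly. The key observation is that, for a fixed opponent profile $\boldsymbol{s}_{-n}^*$, the aggregate $f_c=s_n+\sum_{j\ne n}s_j^*$ is an increasing bijection of $s_n$, so the monotonicity in $f_c$ transfers verbatim to $u_n$ regarded as a function of $s_n$. Any deviation $s_n\neq s_n^*$ leaves $\boldsymbol{s}_{-n}^*$ unchanged and shifts the aggregate to $f_c=\phi+(s_n-s_n^*)$. If $s_n>s_n^*$ then $f_c>\phi$ lies in the nonincreasing branch, giving $u_n(s_n,\boldsymbol{s}_{-n}^*)\le u_n(s_n^*,\boldsymbol{s}_{-n}^*)$; if $s_n<s_n^*$ then $f_c<\phi$ lies in the nondecreasing branch, and again $u_n(s_n,\boldsymbol{s}_{-n}^*)\le u_n(s_n^*,\boldsymbol{s}_{-n}^*)$. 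Since this holds for every $n$ and every $s_n\in S_n$, the profile satisfies $u_n(s_n^*,\boldsymbol{s}_{-n}^*)\ge u_n(s_n,\boldsymbol{s}_{-n}^*)$, which is precisely the PNE condition, and $f_c=\phi$ at this profile by construction.

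The delicate step I expect to be the main obstacle is the first one: turning the monotonicity and the peak-at-$\phi$ property from a physical argument into a rigorous one. This requires checking that, as $f_c$ varies, the backhaul-assignment matrix $\boldsymbol{\eta}_k(\mathcal{F}_c,R)$ and the proportional wired share $\sigma_n(\mathcal{F}_c,R)$ respond monotonically in the claimed direction, and in particular that the two pieces of \refeq{system} coincide at the transition $f_c=\phi$ so that the maximiser is unambiguous. Once this single-peaked structure is secured, the feasibility and the equilibrium verification follow immediately from the argument above.
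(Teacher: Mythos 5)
Your proof follows essentially the same route as the paper's: it exploits the single-peaked structure of the utility in \refeq{system} --- increasing in the aggregate $f_c$ while $f_c\le\phi$ and decreasing once $f_c\ge\phi$ --- and then checks that any unilateral deviation from a profile with $\sum_{n}s_n^*=\phi$ shifts the aggregate into the unfavorable branch, so the PNE inequality holds in both deviation directions. Your write-up is in fact slightly more careful than the paper's (you add an explicit feasibility step for constructing a profile summing to $\phi$, and you flag that the two branches of \refeq{system} must coincide at $f_c=\phi$ for the peak to be well defined, a point the paper glosses over), but the key idea and the case structure are identical.
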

\begin{proof}
A PNE is the state in which none of the SBSs can improve its utility by unilaterally changing its strategy. Denoting $s_n^*$ the strategy chosen by SBS $n$ in the PNE. When an SBS changes its strategy from $s^*_n$ to $s_n$, two cases can be considered: $s_n>s_n^*$ and $s_n<s_n^*$. Thus, at the PNE, the two following conditions must be satisfied:
\begin{numcases}{}
u_n(s_n^*,\phi)\geq u_n(s_n,\phi+(s_n-s^*_n)) \text{ if } s_n>s_n^*,\\
u_n(s^*_n,\phi)\geq u_n(s_n,\phi-(s^*_n-s_n)) \text{ if } s_n<s_n^*.
\end{numcases}
From (4), we can deduce that if the SBS selects another strategy $s_n>s_n^*$, then $u_n(s^*_n,\phi)\geq u_n(s_n,\phi+(s_n-s^*_n))$. This is because the utility is a decreasing function of the total number of requested files when $f_c\geq \phi$, which is the case when $\phi+(s_n-s^*_n)>\phi$. \\
On the other hand, assuming $\phi>0$ and $s_n<s_n^*$, we have $u_n(s_n,\phi)\geq u_n(s_n,\phi-(s^*_n-s_n))$. This is due to the fact that the utility is increasing by increasing the number of requested files when the total number of requested files does not exceed $\phi$, which occurs when SBS $n$ chooses a strategy $s_n<s_n^*$.\\
From these two cases, we can conclude that $f_c=\phi$ is a PNE.
\end{proof}
 In the pure strategy case, we can notice that any combination of strategies that satisfy $\sum_{n=1}^{N}s_n= \phi$ is a PNE, resulting in a large number of equilibria. Thus, in the repeated BMMG it is difficult to capture the frequency with which each SBS downloads predicted files over a large time horizon. In fact, for a given available backhaul capacity, a subset of SBSs may keep requesting a large number of files with probability 1 at each time period, \emph{while other SBSs never download any predicted files}. In order to ensure fairness between the SBSs, in terms of backhaul usage over a large time duration, we consider the proper-mixed strategy case in which each SBS $n$ selects one of the strategies $s_i\in S_n$ with a given probability $p^{(n)}_i \in(0,1)$, thus allowing a fairer backhaul use as shown next.
\subsubsection{Proper-Mixed Strategies}
\vspace{0.2cm}
In the mixed strategy game, an SBS $n \in\mathcal{N}$ can play the strategies in $ S_n$ with a probability profile $\boldsymbol{p}^{(n)}=[p^{(n)}_1,...,p^{(n)}_{F_n}]$, where $p^{(n)}_i\in(0,1)$ \cite{myerson1978refinements}. The expected utility for an SBS $n$ when choosing each of the strategies $c$ and $d$ are given respectively, by:
\begin{align}
\bar{u}_n(c, \boldsymbol{p}_{-n})=&\prod_{i\neq n}^{G}(1-p_i) u_n(c,1)+\sum_{i\neq n}^{G}p_i\prod_{j\neq \{i,n\}}^{G}(1-p_j) u_n(c,2)
+...+\prod_{l\neq n}^{N}p_lu_n(c,G).\\
\bar{u}_n(d,\boldsymbol{p}_{-n})=& \prod_{i\neq n}^{G}p_i u_n(d,1)+\sum_{i\neq n}^{G}(1-p_i)\prod_{j\neq \{i,n\}}^{G}p_j u_n(d,2)
+...+\prod_{l\neq n}^{G}(1-p_l)u_n(d,G),
\end{align}
\vspace{0.2cm}
where $G$ is the cardinality of the set $\mathcal{G}$, $p_i$ is the probability that SBS $i$ downloads its assigned file, and $\boldsymbol{p}_{-n}$ is the probability profile of all SBSs except SBS $n$. The desirable  solution concept in such systems is the proper mixed Nash equilibrium that can be defined as follows.
\begin{definition}
A \emph{proper mixed Nash equilibrium} (PMNE) specifies an optimal mixed strategy $\boldsymbol{p}^{(n)*}$ for each SBS $n\in \mathcal{N}$ such that:
\begin{equation}
\begin{aligned}
\bar{u}_n(\boldsymbol{p}^{(1)*},...\boldsymbol{p}^{(n-1)*},\boldsymbol{p}^{(n)*},\boldsymbol{p}^{(n+1)*},...,\boldsymbol{p}^{(N)*})\geq 
\bar{u}_n(\boldsymbol{p}^{(1)*},...,\boldsymbol{p}^{(n-1)*},\boldsymbol{p}^{(n)},\boldsymbol{p}^{(n+1)*},...,\boldsymbol{p}^{(N)*}),
\end{aligned}
\end{equation}
where $\bar{u}_n$ is the expected utility of SBS $n$ when the used probability profile by all the SBSs is $\boldsymbol{p}=[\boldsymbol{p}^{(1)},...,\boldsymbol{p}^{(N)}]$.
\end{definition}
 Even though proving the existence and uniqueness of the PMNE is possible when the number of players is two and the number of strategies is also equal to two, it is very challenging to extend this results to the case of multiple players and strategies even for MGs \cite{challet2013minority,moro2004minority}. The main challenge in finding the mixed strategies equilibrium is in the computation of the equilbria due to complexity of the system of equations that should be solved to find the different probabilities per player.
 In order to solve the BMMG, we reduce the problem to a minority game with multiple players, each of which has two strategies. We study this simplified game and then map the results to the original backhaul management MG.
\vspace{-0.1cm}
\subsection{Simplified Backhaul Management Minority Game (SBMMG)}
\vspace{-0.1cm}
To cast the backhaul management problem as a simplified minority game, we introduce an additional set $\mathcal{V}=\bigcup_{n\in\mathcal{N}}\mathcal{V}_n$ of $V$ virtual SBSs. In fact, for each real SBS $n$ that has a strategy set composed of $F_n$ predicted files, we consider one real SBS $n$ and create a set $\mathcal{V}_n$ of $F_n-1$ virtual SBSs. In this modified model, each real and virtual SBS $n \in \mathcal{G}=\mathcal{N}\cup\mathcal{V}$ is assigned one predicted file, i.e., $s_n=1$, and has to decide whether to download or not that predicted file. The strategy set for all the SBSs becomes a binary set $S=\{c,d\}$, in which the strategy $c$ corresponds to requesting and caching the file, and $d$ corresponds to not requesting the file from the MBSs. The set of current requests for the real SBSs is the same while it is empty for the virtual SBSs and consequently $R_{n}=0,~$ $\forall n \in\mathcal{V}$. The utilities for an SBS $n$ of choosing a strategy $c$ or $d$ when the total number of predicted files that will be requested by all virtual and real SBSs equals $f_c$, are given by:
\begin{equation}
\left\{
\begin{array}{l}
  u_n(c,f_c)=  -R_{n}-D_n(s_n)   +\sum_{m\in\mathcal{M}}{\Big (c_{mn}(\mathcal{F}_c,R)+\sum_{k\in \mathcal{K}}{\omega_k \text{log}\Big(1+\gamma_{mkn}(\eta_{mkn}(\mathcal{F}_c, R))\Big) }\Big)},\\
u_n(d,f_d)= -u_n(c,f_c+1),
\end{array}
\right.
\end{equation}
where $f_d=\sum_{n\in \mathcal{N}} F_n-f_c$ is the number of predicted files that the SBSs decide to not download.

In this SBMMG formulation, the number of requested files corresponds to the number of SBSs using strategy $c$. In an MG, a player is better off if it chooses the strategy chosen by the minority. In our context, the SBS gets a positive utility, i.e. serves the predicted file without affecting the QoS of the current requests, if the total number of predicted files that is requested by all the SBSs does not exceed $\phi$. On the other hand, if the SBS chooses not to download the predicted file while the total number of requested files does not exceed $\phi$, the SBS gets a negative utility. This represents the regret of not downloading the file when it is possible and waisting backhaul.
\begin{remark}
Note that this problem formulation is equivalent to the BMMG as the SBSs still take their decision independently. Moreover, in the BMMG, the decisions taken by one SBS do not depend on the identity of the SBS itself but on the mean number of requested files. Thus, introducing a set of virtual SBSs that take decisions independently on their real related SBSs, keeps the model valid. 
\end{remark}
Next, we study the proper mixed Nash equilibrium.\vspace{0.05cm}
\subsubsection{Proper-Mixed Strategies}
In this section we start by studying the formulated SBMMG and then extend the results for the BMMG. In particular, we are interested in finding the PMNE where all the SBSs take their decision probabilistically and do not have the incentive to deviate from their chosen strategy. In contrast to existing works on minority games \cite{challet2013minority} which provide conditions for the existence of mixed equilibria when $\phi=\frac{F-1}{2}$ with $F=2k+1$ and $k\in\mathbbm{N}_0$, here, we need to extend the results for any $\phi $ and $F\in \mathbbm{N}_0$. For the SBMMG, we have the following result.

To define the mixed strategies of the SBSs, we use the indifference principle  \cite{harsanyi1973games} that provides the condition that allows the players to reach a mixed strategy Nash equilibrium. For this, each player should be indifferent amongst each of the actions he puts non-zero weight on, yet he mixes them so as to make every other player is also indifferent. By using the indifference principle, a PMNE exists when the expected utility of requesting the predicted file is equal to the expected utility of not requesting that file, i.e., $u_n(c, \boldsymbol{p}_{-n})=u_n(d,\boldsymbol{p}_{-n})$. By substituting $u_n(d,f_d)$ based on its definition in (9) and equating the utilities (10) and (11), we get:
\begin{align}
\prod_{i\neq n}^{G}(1-p_i) u_n(c,1)+\sum_{i\neq n}^{G}p_i\prod_{j\neq \{i,n\}}^{G}(1-p_j) u_n(c,2)
+...+\prod_{l\neq n}^{N}p_lu_n(c,G)=0.
\end{align}


However, a PMNE may not be fair in the sense that some SBSs will request their file with higher probability compared to the other SBSs. In order to allow the SBSs to equally use the available backhaul, we are interested in looking for a fair PMNE.
\begin{proposition}
In the SBMMG, there exists a unique fair PMNE where all the SBSs select strategy $c$ with the same probability $p$.
\end{proposition}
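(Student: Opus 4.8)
The plan is to search for a \emph{symmetric} profile in which every SBS in $\mathcal{G}$ caches its file with one common probability $p\in(0,1)$; symmetry is exactly what ``fair'' means here, so the whole statement reduces to showing that there is one and only one such $p$. The enabling step is to collapse the indifference condition into a single scalar equation. Setting $p_i=p$ for every $i\neq n$, the number of the remaining $G-1$ SBSs that choose $c$ is binomial, and because the relation $u_n(d,f_d)=-u_n(c,f_c+1)$ forces the expected payoff of $d$ to be the exact negative of that of $c$, the indifference requirement $\bar{u}_n(c,\boldsymbol{p}_{-n})=\bar{u}_n(d,\boldsymbol{p}_{-n})$ becomes simply that the expected caching payoff vanishes:
\[
g(p)\;:=\;\sum_{k=0}^{G-1}\binom{G-1}{k}p^{k}(1-p)^{G-1-k}\,u_n(c,k+1)\;=\;0 ,
\]
which is precisely the identity $\bar{u}_n(c,\boldsymbol{p}_{-n})=0$ already obtained above. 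The problem is thus reduced to locating the roots in $(0,1)$ of the single polynomial $g$.

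Next I would record the two structural properties of $u_n(c,\cdot)$ on which everything rests and which follow from the utility definition and the discussion preceding Theorem~1: it is strictly decreasing in the total number of cached files (each additional download lowers the per-SBS backhaul rate, whereas the rate $D_n(1)$ that SBS $n$ itself consumes is fixed), and it is positive whenever its argument does not exceed the capacity $\phi$ and negative once the argument exceeds $\phi$. In the non-degenerate regime $1\le\phi<G$ these give the boundary values $g(0)=u_n(c,1)>0$ and $g(1)=u_n(c,G)<0$. Since $g$ is a polynomial, hence continuous, the intermediate value theorem produces at least one root $p^{\ast}\in(0,1)$, which settles existence.

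For uniqueness I would show that $g$ is strictly monotone. The cleanest argument is by stochastic dominance: writing $X_p$ for a $\mathrm{Binomial}(G-1,p)$ count, $X_p$ is first-order stochastically increasing in $p$, and $g(p)=\mathbb{E}\!\left[u_n(c,X_p+1)\right]$ is the expectation of a strictly decreasing function of $X_p$; therefore $g$ is strictly decreasing on $(0,1)$ and its root is unique. Equivalently, differentiating the Bernstein form of $g$ turns the derivative into a positively weighted sum of the forward differences $u_n(c,k+2)-u_n(c,k+1)$, all of which are negative. Carrying the common probability $p^{\ast}$ back through the real/virtual SBS correspondence and invoking Remark~1 then transfers the fair PMNE from the SBMMG to the original BMMG.

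The part I expect to be delicate is not the calculus but justifying the symmetric reduction itself. Real SBSs carry $R_n>0$ while the virtual ones have $R_n=0$, so the per-player functions $g_n$ are not literally identical, and one must argue that a single common $p$ can make every SBS indifferent. What rescues the argument is that the congestion threshold $\phi$ is a \emph{global} system parameter: every $g_n$ therefore inherits the same sign pattern $g_n(0)>0>g_n(1)$ together with the same strict monotonicity, so the existence-plus-monotonicity reasoning applies uniformly and the fairness constraint---identical caching probability for all SBSs---singles out this symmetric root. Deriving the monotonicity and the sign change of $u_n(c,\cdot)$ cleanly from the backhaul-allocation model is where most of the care is needed.
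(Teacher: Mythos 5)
Your proof follows the paper's argument essentially step for step: the same indifference reduction to the scalar equation $\sum_{k=0}^{G-1}\binom{G-1}{k}p^{k}(1-p)^{G-1-k}u_n(c,k+1)=0$, existence from the sign change between the two boundary profiles (the paper's deviations $\boldsymbol{\sigma}_{-n}$ with $\sigma_i=0$ and $\boldsymbol{\pi}_{-n}$ with $\pi_i=1$ are exactly your evaluations $g(0)=u_n(c,1)>0$ and $g(1)=u_n(c,G)<0$) combined with the intermediate value theorem, and uniqueness from strict monotonicity of the expected utility in $p$. The only differences are to your credit rather than a different route: your first-order-stochastic-dominance (equivalently, Bernstein-derivative) argument rigorously establishes the monotonicity that the paper merely asserts, and the heterogeneity caveat you flag at the end ($R_n>0$ for real versus $R_n=0$ for virtual SBSs, so the per-player indifference functions $g_n$ need not share a root) is a genuine subtlety that the paper's own proof passes over in silence, though your ``fairness singles out the root'' remark does not fully resolve it either.
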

\begin{proof}
When all the SBSs choose strategy $c$ with the same probability $p$, the utilities of selecting one of the strategies $c$ and $d$ write as follows:
\vspace{0.1cm}
\begin{align}
\bar{u}_n(c,\boldsymbol{p}_{-n})=\sum_{k=0}^{G-1}\dbinom{G-1}{k} p^{k}(1-p)^{G-(k+1)}u_n(c,k+1).\\
\bar{u}_n(d,\boldsymbol{p}_{-n})=\sum_{k=0}^{G-1}\dbinom{G-1}{k} p^{k}(1-p)^{G-(k+1)}u_n(d,G-k).
\end{align}

Using the indifference principle and (9), there exists a PMNE when:
\begin{equation}
\sum_{k=0}^{G-1}\dbinom{G-1}{k} p^{k}(1-p)^{G-(k+1)}u_n(c,k+1)=0.
\end{equation}
To prove that $p_i=p$, $\forall i\in\mathcal{N}$, is a solution for (15), assume the case where all the SBSs except SBS $n$ deviate from their mixed strategy and choose a pure strategy $c$ with probability $\sigma_i =0$, $\forall i\neq n$. Then, from (13) and (14), we have:
\begin{equation}
\bar{u}_n(c,\boldsymbol{\sigma}_{-n})=u_n(c,1)>u_n(d,G)=\bar{u}_n(d,\boldsymbol{\sigma}_{-n}).
\end{equation}
This comes from the fact that $u_n(d,G)=-u_n(c,1)$ and $u_n(c,1)>0$. Similarly, assume that all the SBSs except SBS $n$ select strategy $c$ with probability $\pi_i=1$, $\forall i\neq n$. From (13) and (14), we have:
\begin{equation}
\bar{u}_n(d,\boldsymbol{\pi}_{-n})=u_n(d,1)>u_n(c,G)=\bar{u}_n(c,\boldsymbol{\pi}_{-n}).
\end{equation}
From (16) and (17), we have:
\begin{equation}
\begin{cases}{}
\bar{u}_n(c,\boldsymbol{\sigma}_{-n})-\bar{u}_n(d,\boldsymbol{\sigma}_{-n})>0,\\
\bar{u}_n(c,\boldsymbol{\pi}_{-n})-\bar{u}_n(d,\boldsymbol{\pi}_{-n})<0.
\end{cases}
\end{equation}
By using the intermediate value theorem, we deduce that there exists a probability profile $\boldsymbol{p}=[p_1,...,p_G]$, with $p_i=p, ~~\forall i\in \mathcal{G}$, and $p\in(0,1)$ solving (12).
Since, the utility of selecting strategy $c$ in (13) is a decreasing function of $p$ and the utility of selecting strategy $d$ in (14) is an increasing function of $p$, then, the two utilities meet in only one point which is $p$.
\end{proof}

From Proposition 3, we can deduce the following:
\begin{corollary}
There exists a unique fair PMNE for the BMMG, where each SBS $n\in\mathcal{N}$, chooses a strategy profile $\boldsymbol{p}_n=[\mathcal{B}(1,F_n,p), \mathcal{B}(2,F_n,p),...,\mathcal{B}(i,F_n,p),...,\mathcal{B}(F_n,F_n,p)]$.
Here, $\mathcal{B}(i,F_n,p)$ is the binomial distribution and the probability $p$ is the same for all the SBSs in $ \mathcal{N}$.
\end{corollary}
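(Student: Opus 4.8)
The plan is to obtain the BMMG equilibrium directly by aggregating the SBMMG equilibrium supplied by Proposition 3, exploiting the equivalence asserted in Remark 1. First I would recall that, by construction of the SBMMG, each real SBS $n\in\mathcal{N}$ is represented by a block of $F_n$ binary SBSs, namely SBS $n$ itself together with the $F_n-1$ virtual SBSs in $\mathcal{V}_n$, each of which controls a single predicted file and decides only whether to play $c$ (download) or $d$ (not download). By Remark 1 these decisions are taken independently and depend only on the aggregate count $f_c$ rather than on the identity of the deciding SBS, so a strategy profile of the SBMMG induces a well-defined strategy profile of the BMMG by reading off, for each $n$, how many of its $F_n$ associated files are downloaded.

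Next I would invoke Proposition 3, which guarantees a unique fair PMNE of the SBMMG in which every one of the $G=\sum_{n\in\mathcal{N}}F_n$ binary SBSs plays $c$ with the same probability $p\in(0,1)$. Restricting attention to the block associated with a fixed real SBS $n$, the number $s_n$ of predicted files that SBS $n$ downloads equals the number of successes among $F_n$ independent Bernoulli$(p)$ trials. Hence $s_n$ follows the binomial law, i.e. $\Pr[s_n=i]=\dbinom{F_n}{i}p^{i}(1-p)^{F_n-i}=\mathcal{B}(i,F_n,p)$, which is precisely the claimed strategy profile $\boldsymbol{p}_n$. Because $p$ is common to all binary SBSs, the same $p$ governs the binomial distribution for every real SBS in $\mathcal{N}$.

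The step I expect to be the main obstacle is verifying that this aggregated binomial profile is genuinely a PMNE of the BMMG, i.e. that no real SBS can gain by unilaterally deviating to a different distribution over $\{0,\dots,F_n\}$. Here I would argue that the indifference condition satisfied by each binary SBS at the SBMMG equilibrium — each is indifferent between $c$ and $d$ given that the others play $p$ — carries over, through the independence of the $F_n$ trials within a block, to indifference of SBS $n$ among all of its pure strategies $s_n\in\{0,\dots,F_n\}$. Once this marginal indifference is established, any mixture over these pure strategies, and the binomial mixture in particular, is a best response, so the binomial profile is an equilibrium. The delicate point is confirming that the utility equivalence of Remark 1 preserves the best-response structure under the block aggregation, so that marginal indifference at the binary level is genuinely equivalent to indifference at the level of file counts.

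Finally, uniqueness of the fair BMMG equilibrium follows from the uniqueness of $p$ in Proposition 3 together with the fact that the aggregation map from SBMMG profiles to BMMG profiles is deterministic: a distinct fair BMMG equilibrium would have to arise from a distinct common success probability, contradicting Proposition 3.
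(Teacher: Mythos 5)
Your proposal follows essentially the same route as the paper: invoke Proposition~3 for the common probability $p$ at the fair PMNE of the SBMMG, observe via Remark~1 that each real SBS $n$ then downloads each of its $F_n$ files independently with probability $p$, and conclude that the number of downloaded files is binomial, $\mathcal{B}(i,F_n,p)$. The paper's own proof is just this aggregation argument stated in two sentences; your additional scaffolding (checking that indifference at the binary level transfers to indifference over file counts, and deriving uniqueness from the uniqueness of $p$) makes explicit what the paper leaves implicit in Remark~1, but it is the same argument, not a different one.
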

\begin{proof}
Since at the fair PMNE in the SBMMG, each virtual/real SBS downloads one file with probability $p$, then the related real SBS $n$ in the BMMG, will decide to download each file from the $F_n$ files with an independent probability $p$. The probability of selecting $i$ files is hence given by the binomial distribution $\mathcal{B}(i,F_n ,p)$.
\end{proof}
The result in Corollary 1 does not follow directly from existing works on MGs \cite{challet2013minority} since those works are restricted to games with the same binary set of strategies for all the players, which is not the case in the BMMG. Moreover, the formulated BMMG has the independence characteristic, given in Remark 1, which is the main parameter that allows the introduction of the set of virtual SBSs and thus the derivation of Corollary 1. 

While analytically characterizing the uniqueness and properties of the PMNE is possible, the next step is to develop a practical algorithm that enables the SBSs to reach this PMNE or its neighborhood. To this end, we will next develop a reinforcement learning algorithm that converges to a refinement of the fair PMNE which is known as the Boltzmann-Gibbs equilibrium (BGE) \cite{lasaulce2011game}. The BGE is a special case of the $\epsilon$-Nash equilibrium which is a solution concept in which the players are within $\epsilon$ of the sought equilibrium. In other words, at an $\epsilon$-equilibrium no deviating SBS can improve its expected utility by a small amount $\epsilon$.

\section{Self-Organizing Learning Algorithm}
\label{algo}
To find an approximation of the PMNE, we propose an algorithm based on RL in which the players do not need to know any information about the actions of the other players. At each time period, the SBSs need to only observe an estimation of their utility and select their strategy accordingly. In contrast to works that use RL approaches such as \cite{bennis2013self}, in which the convergence to a BGE and the uniqueness are not ensured, in this work we prove that the RL algorithm converges to a unique BGE that approaches the PMNE of the formulated SBMMG. First, we need to define the notion of a smoothed best response.
\begin{definition}{} 
The smoothed best response function $\boldsymbol{\beta}_n^{(\kappa_n)}:[0,1]^{(G-1)\times 2}\to [0,1]^2$, with parameter $\kappa_n>0$, is defined as follows:
\begin{equation}
\boldsymbol{\beta}_n^{(\kappa_n)}(\boldsymbol{p}_{-n})=\Big(\beta_{n}^{(\kappa_n)}(c,\boldsymbol{p}_{-n}),\beta_{n}^{(\kappa_n)}(d,\boldsymbol{p}_{-n})\Big),
\end{equation}
and $\forall~~ a ~~\in \{c,d\}$, $\beta_{n}^{(\kappa_n)}(a,\boldsymbol{p}_{-n})$ is given by the Boltzmann-Gibbs distribution:
\begin{equation}
\beta_{n}^{(\kappa_n)}(a,\boldsymbol{p}_{-n})=\frac{\exp\Big(\kappa_{n}\bar{u}_n(a,\boldsymbol{p}_{-n})\Big)}{\exp\Big(\kappa_{n}\bar{u}_n(c,\boldsymbol{p}_{-n})\Big)+\exp\Big(\kappa_{n}\bar{u}_n(d,\boldsymbol{p}_{-n})\Big)}.
\end{equation}
\end{definition}

Here, we note that depending on the value of the parameter $\kappa_n$, the smoothed best response of SBS $n$ changes. In fact, as $\kappa_n\to0$, the smoothed best response of SBS $n$ converges to the uniform probability distribution, i.e., $\boldsymbol{\beta}_n^{(\kappa_n)}(\boldsymbol{p}_{-n})=(1/2,1/2)$, irrespective the strategies adopted by all the other players. However, when $\kappa_n\to\infty$, the smoothed best response is a uniform probability distribution over the pure strategies that are best responses to the strategies adopted by all the other players. The parameter $\kappa_n$ represents the exploitation/exploration rate that enables the small base station to make a decision about whether to just exploit by always selecting the action with the maximum utility in the current stage, or fold in some exploration and try other actions to discover more information about the network that can be used to achieve better long-term rewards. During the exploitation period, the SBSs do not stop the learning process as they will still use the reward received at the reached sub-slot, to adapt its behavior in the future sub-slots. However, the SBSs may be blocked at local minima which prevent them from reaching the highest possible utility. At the exploration phase, the SBSs determines which action to choose so that the SBSs learn perfectly the best actions that will allow them to determine how behave in the future. Eventually, when everything to know is learned by th SBSs, there is no need to continue the exploration, and the SBSs must act optimally according to the best learned and possible policy. The desired algorithm must allow the exploration probability to decrease as the SBSs gather enough information and the network is better known. This can enable for learning the optimal policy by the end of the time period. The value of $\kappa_n$ is commonly chosen to be $\frac{1}{t}$, where $t$ is the current sub-slot.

Now, we define the Boltzmann-Gibbs equilibrium (BGE) which is also known as the logit equilibrium as follows:
\begin{definition}
The strategy profile $\boldsymbol{p}^*=[\boldsymbol{p}_1^*,...,\boldsymbol{p}_G^*]$ is a \emph{BGE} with parameter $\kappa_k>0$ of the game if $\forall ~~n\in \mathcal{G}$,
\begin{equation}
\boldsymbol{p}_n^*=\boldsymbol{\beta}_n^{(\kappa_n)}(\boldsymbol{p}^*_{-n}).
\end{equation}
\end{definition}
In an SCN, the decentralized nature of the operations at the SBSs makes it difficult for each SBS to communicate with all the other SBSs and get access to their selected strategies. Thus, the SBSs cannot compute the exact value of their utility at a given time period. Hence, we assume that the utilities of the SBSs are subject to random error and an SBS $n$ can only observe an estimation $\widetilde{u}_n(a_n(t))$ of its utility function $u_n(a_n(t),\boldsymbol{a}_{-n}(t))$ for selecting action $a$ at time $t$ when all the other SBSs select the actions given in $\boldsymbol{a}_{-n}$. The estimated utility is given as follows: 
\begin{equation}
\widetilde{u}_n(a_n(t))=u_n(a_n(t),\boldsymbol{a}_{-n}(t))+\epsilon_{n,a_n(t)}(t),
\end{equation}
where $\forall n \in \mathcal{G}, ~~a_n(t)\in \{c,d\}$, and $\epsilon_{n,a_n(t)}$ is a random variable that represents the distribution of the observation error on the instantaneous utility when SBS $n$ selects action $a$. Its expected value is assumed to be 0, i.e., $\mathbbm{E}\Big[\epsilon_{n,a_n(t)}\Big]=0$.  The variable $t$ is used to denote the number of the current sub-slot as a given  time period is divided into multiple sub-slots  based on the classical wireless frame structure. It should be noted that a time period is the required time for the SBS to receive via the backhaul all the files it decides to download. To estimate the instantaneous utility, an SBS does not require any communication with the other SBSs nor with a centralized entity. The value of $\widetilde{u}_n(a_n(t))$ can be computed based on the received power from the serving macro base station as well as the perceived  interference power from all the other macro base stations when serving other SBSs over the same backhaul resource blocks.

Knowing the estimated value of the utility at each time period as well as the selected action, each SBS can estimate the  achievable expected utility for each of the actions in its strategy set. Based on these estimations, the SBS selects the action with the highest expected utility. As a consequence of the Boltzmann-Gibbs distribution, by adapting the parameters $\kappa$, the SBSs can be pushed to explore new actions and not always select the ones achieving the highest performance. This allows the SBSs to try all their set of actions looking for any possible improvement of the expected utility. In what follows, we present a decentralized RL algorithm for backhaul management in which the SBSs simultaneously learn both utilities and strategies. First, we denote the estimate of the expected utility of an SBS $n$ by:
 \begin{equation}
\hat{ \boldsymbol{u}}_n(t)=\Big( \hat{u}_n(c,t), \hat{u}_n(d,t)\Big),
 \end{equation}
 where $\hat{u}_n(a,t)$ is the estimate of $\hat{u}_n(a,\boldsymbol{p}_{-n}(t))$. The expected utility is updated at each time period based on the instantaneous observations $\boldsymbol{\widetilde{u}}_n(t)$ while the probability of selecting each action is a function of the estimated smoothed best response function. Before providing the RL algorithm, we define the estimate best response function $\boldsymbol{\widetilde{\beta}}_n^{\kappa_n}: \mathbbm{R}^2\to \mathbbm{R}_+$ based on the estimated utility function $\boldsymbol{\hat{u}}_n$ as follows:
\begin{equation}
\boldsymbol{\widetilde{\beta}}_n^{\kappa_n}(\boldsymbol{\hat{u}}_n(t))=\Big(\widetilde{\beta}_n^{\kappa_n}(c,\boldsymbol{\hat{u}}_n(t)),\widetilde{\beta}_n^{\kappa_n}(d,\boldsymbol{\hat{u}}_n(t))\Big),
\end{equation}
with $\boldsymbol{\hat{u}}_n(t)=[\hat{u}_n(c,t),\hat{u}_n(d,t)]$ and
\begin{equation}
\widetilde{\beta}_n^{\kappa_n}(a,\boldsymbol{\hat{u}}_n(t))=\frac{\exp(\kappa_n \hat{u}_n(a,t))}{\exp(\kappa_n \hat{u}_n(c,t))+\exp(\kappa_n \hat{u}_n(d,t))},
\end{equation}
where $a\in\{c,d\}$.

As first proposed in \cite{leslie2003convergent}, any RL algorithm can be defined as follows ($\forall n\in\mathcal{G}$, $\forall a\in \{c,d\})$:

\begin{equation}
\left\{
\begin{array}{l}
  \hat{u}_n(a,t)=\hat{u}(a,t-1)+\\
  +\alpha (t)\mathbbm{1}_{\{a_n(t)=a\}}\Big(\widetilde{u}_n(a(t))-\hat{u}_n(a,t-1)\Big),\\
p_n(t)= p_n(t-1)+\lambda_n(t)\Big(\widetilde{\beta}^{(\kappa_n)}_n(\boldsymbol{\hat{u}}_n(t))-p_n(t-1)\Big),
\end{array}
\right.
\end{equation}
where, $(\boldsymbol{\hat{u}}_n(0), \boldsymbol{p}_n(0))\in \mathbbm{R}^{2}\times [0,1]^{2}$, is an arbitrary initialization of player $n$. For instance, $\boldsymbol{\hat{u}}_n(0)=(0,0)$ and $\boldsymbol{p}_n=(1/2,1/2)$, can be defined as the initial values. Moreover, the following conditions must be satisfied for all $(j,n) \in \mathcal{G}^2$:
\begin{numcases}{}
 \lim_{T\to\infty}{\sum_{t=1}^{T}\alpha_n(t)}= +\infty, \lim_{T\to\infty}{\sum_{t=1}^{T}\alpha^2_n(t)}<+\infty,\nonumber\\
 \lim_{T\to\infty}{\sum_{t=1}^{T}{\lambda_n(t)}=+\infty},  \lim_{T\to\infty}{\sum_{t=1}^{T}{\lambda^2_n(t)}<+\infty},\\
 \lim_{t\to\infty}{\frac{\lambda_n(t)}{\alpha_n(t)}=0}.\nonumber
\end{numcases}
and either,
\begin{numcases}{}
 \forall n\in\mathcal{G}, ~~~~\lambda_n(t)=\lambda(t),~~\nonumber \text{ or},\\
\forall n \in\mathcal{G}\setminus\{G\}, ~~~~\lim_{t\to\infty}{\frac{\lambda_n(t)}{\lambda_{n+1}(t)}=0}.
\end{numcases}
It should be noted that the proposed RL algorithm is (26) does not require any assumption on the dynamics of the channel state as it is able to capture the changes of both channel and users' demands. In fact, in contrast to other learning algorithm, the SBSs in the proposed algorithm decide whether to download predicted files or not based on the estimation of their utility that is computed using the channel and traffic statistics from a number of previous sub-slots with different channel state and demand profiles. For RL algorithms that follow (26), there is no guarantee of convergence to an equilibrium even if the algorithm has a steady point \cite{rose2011learning,leslie2003convergent}. Moreover, most works that are able to prove convergence to an equilibrium such as \cite{bennis2013self}, cannot guarantee the uniqueness of this equilibrium. For the studied SBMMG, we provide the following result on the convergence of the algorithm in (26) to a unique BGE.
\begin{theorem}
The algorithm in (26) converges to a unique BGE with parameter $\kappa_n,~~ \forall n\in\mathcal{G}$, in the SBMMG and we have:
\begin{equation}
\left\{
\begin{array}{l}
\lim_{t\to\infty}\boldsymbol{p}_n(t) =\boldsymbol{p}_n^*,\\
\lim_{t\to\infty}\hat{u}_n(a,t)=\bar{u}_n(a,\boldsymbol{p}_{-n}^*).
\end{array}
\right.
\end{equation}
\end{theorem}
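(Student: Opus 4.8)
The plan is to recognize the update rule in (26) as a two-timescale stochastic approximation scheme of the type analyzed in \cite{leslie2003convergent}, and to apply the ODE method separately on each timescale. The step-size conditions (27) guarantee that the utility estimates $\hat{u}_n(a,t)$ evolve on a fast timescale while the mixed strategies $\boldsymbol{p}_n(t)$ evolve on a slow one, since $\lambda_n(t)/\alpha_n(t)\to 0$; condition (28) ensures the players' slow updates are themselves either synchronous or ordered on separated timescales, so that from the viewpoint of any one player the strategies of the others appear quasi-static.

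First I would treat the fast subsystem. Holding $\boldsymbol{p}_{-n}$ fixed, the first line of (26) is a Robbins--Monro recursion whose associated ODE is the linear, globally asymptotically stable system $\dot{\hat{u}}_n(a)=\bar{u}_n(a,\boldsymbol{p}_{-n})-\hat{u}_n(a)$. Because the observation noise satisfies $\mathbbm{E}[\epsilon_{n,a_n(t)}]=0$ and $\sum_t\alpha_n^2(t)<\infty$, the martingale perturbation is summable and the standard stochastic-approximation hypotheses hold, so $\hat{u}_n(a,t)$ tracks $\bar{u}_n(a,\boldsymbol{p}_{-n})$. This establishes the second limit in (29) once the slow variables have settled, and it lets me substitute the equilibrated estimates into the slow recursion.

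Next I would analyze the slow subsystem. With $\hat{u}_n$ replaced by its fast-timescale limit, the second line of (26) has the associated mean ODE $\dot{\boldsymbol{p}}_n=\boldsymbol{\beta}_n^{(\kappa_n)}(\boldsymbol{p}_{-n})-\boldsymbol{p}_n$, whose rest points are exactly the profiles satisfying the BGE condition (22). To obtain a single globally attracting rest point I would show the smoothed best-response map $\boldsymbol{p}\mapsto\boldsymbol{\beta}(\boldsymbol{p})$ is a contraction: the Boltzmann--Gibbs distribution (20) is Lipschitz in the expected utilities with constant proportional to $\kappa_n$, while the expected utilities (13)--(14) are Lipschitz in $\boldsymbol{p}$ with a constant controlled by the bounded per-file rates, so for the operating values of $\kappa_n$ the composite Lipschitz constant is below one. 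Banach's fixed point theorem then yields a unique BGE $\boldsymbol{p}^*$, and the contraction makes the ODE globally asymptotically stable, with $\|\boldsymbol{p}(t)-\boldsymbol{p}^*\|$ serving as a Lyapunov function. Finally, the monotonicity used in the proof of Proposition 3, namely that $\bar{u}_n(c,\boldsymbol{p}_{-n})$ is strictly decreasing and $\bar{u}_n(d,\boldsymbol{p}_{-n})$ strictly increasing in the common probability, forces this unique fixed point to be the symmetric fair profile, so $\boldsymbol{p}^*$ is the smoothed counterpart of the fair PMNE, giving the first limit in (29).

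The main obstacle is this uniqueness-and-global-stability step, since logit best-response dynamics generically admit several coexisting rest points; the contraction estimate must therefore be made quantitative in $\kappa_n$ and in the utility bounds, and this is precisely where the minority-game structure (bounded, monotone expected utilities) is indispensable rather than generic smoothed-best-response continuity. A secondary technical point is verifying that the timescale separation in (27)--(28) is uniform across all players, so that freezing $\boldsymbol{p}_{-n}$ during the fast phase is simultaneously legitimate for every $n$; condition (28), which synchronizes or orders the $\lambda_n$, is invoked here. The boundedness of the iterates required by the ODE method is immediate, because $\boldsymbol{\beta}_n^{(\kappa_n)}$ takes values in the open simplex, keeping $\boldsymbol{p}_n(t)$ in $[0,1]^2$ and the utilities bounded, so no extra stabilizing truncation is needed.
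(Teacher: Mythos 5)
Your proposal is correct in substance, and its decisive step --- showing that the composition of the expected-utility map with the Boltzmann--Gibbs (logit) response is a contraction, invoking Banach's fixed point theorem, and using $\|\boldsymbol{p}-\boldsymbol{p}^*\|_{\infty}$ as a Lyapunov function for the mean ODE --- is exactly the paper's uniqueness argument, down to the same quantitative requirement that $\kappa_n$ be small relative to the utility bound $\big|\sum_{k=0}^{G-1}u_n(c,k+1)\big|$. Where you genuinely differ is in how convergence itself is obtained. The paper first notes that the SBMMG, being a minority game, is a congestion game and hence admits a potential function (Monderer--Shapley), and then invokes Theorem 7 of Leslie--Collins to get convergence with probability one to \emph{some} BGE; the contraction enters only afterwards, to show that the set of BGEs is a singleton. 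You instead run the two-timescale ODE method directly: the fast recursion tracks $\bar{u}_n(a,\boldsymbol{p}_{-n})$, the slow recursion follows smoothed best-response dynamics, and the contraction supplies a unique globally asymptotically stable rest point, hence convergence. Your route is more self-contained (no potential-game machinery), but it makes the convergence claim itself conditional on the contraction bound on $\kappa_n$, whereas the paper's potential-game step yields convergence to a BGE for any $\kappa_n$, with only uniqueness requiring the bound; since the theorem asserts both convergence and uniqueness, the two routes deliver the same final statement. One caveat: your closing assertion that the monotonicity from Proposition 3 forces the unique fixed point to be the symmetric fair profile is neither needed for the theorem nor actually established --- that monotonicity concerns utilities along symmetric profiles only, and the real and virtual SBSs have different utilities (through $R_n$), so symmetry of the fixed point would need a separate argument; it is best dropped.
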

\begin{proof}
The proof is given in the Appendix.
\end{proof}
The main challenge is to prove that the proposed algorithm converges to a unique point for the formulated game and show that this point corresponds to the BGE. In contrast to most of work that use similar RL algorithms, we do not only prove that the algorithm converges but we also ensure that the reached equilibrium is unique for the formulated game. To this end, we first write (26) as an approximated ordinal differential equation and prove that it admits at least one rest point. Then, to show the uniqueness of the rest point, we prove that the defined ordinal differential equation is a contraction which in turn guarantees the uniqueness of equilibrium.

Since a BGE is a special case of $\epsilon$-Nash equilibrium, the BGE is an approximate equilibrium of the fair PMNE that is within $\epsilon$ of the PMNE. Here, we provide a bound for the utility improvement an SBS can obtain by unilaterally deviating from the BGE.
\begin{proposition}
At the BGE, assume the strategy profile of the SBMMG with parameters $\kappa_n>0$ is $\boldsymbol{p}^*$. Then, 
$\boldsymbol{p}^*$ is an $\epsilon$-equilibrium with $\epsilon=\frac{1}{\kappa_n}\text{ln}(1)$.
\end{proposition}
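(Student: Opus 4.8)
The plan is to exploit the standard variational characterization of the Boltzmann-Gibbs (logit) response as an \emph{entropy-regularized} best response, and then to bound the resulting regularization gap by the entropy of a binary distribution. First I would show that, for a fixed opponent profile $\boldsymbol{p}_{-n}$, the distribution $\boldsymbol{\beta}_n^{(\kappa_n)}(\boldsymbol{p}_{-n})$ defined in (20) is the unique maximizer, over all mixed strategies $\boldsymbol{q}_n=(q_n(c),q_n(d))$, of the strictly concave objective
\begin{equation}
\Phi_n(\boldsymbol{q}_n)=\sum_{a\in\{c,d\}}q_n(a)\,\bar{u}_n(a,\boldsymbol{p}_{-n})-\frac{1}{\kappa_n}\sum_{a\in\{c,d\}}q_n(a)\ln q_n(a).
\end{equation}
This is a Lagrangian computation: imposing $q_n(c)+q_n(d)=1$ and setting the gradient of $\Phi_n$ to zero yields $\ln q_n(a)=\kappa_n\bar{u}_n(a,\boldsymbol{p}_{-n})+\text{const}$, i.e.\ exactly the softmax form (20), and strict concavity of the entropy term guarantees the maximizer is unique.

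Next I would invoke the fixed-point property of the BGE. At $\boldsymbol{p}^*$ we have $\boldsymbol{p}_n^*=\boldsymbol{\beta}_n^{(\kappa_n)}(\boldsymbol{p}_{-n}^*)$, so $\boldsymbol{p}_n^*$ maximizes $\Phi_n$. Letting $a^*$ be the pure action that maximizes $\bar{u}_n(\cdot,\boldsymbol{p}_{-n}^*)$, and comparing the value of $\Phi_n$ at $\boldsymbol{p}_n^*$ with its value at the degenerate distribution on $a^*$ (whose entropy vanishes), I obtain
\begin{equation}
\bar{u}_n(\boldsymbol{p}_n^*,\boldsymbol{p}_{-n}^*)+\frac{1}{\kappa_n}H(\boldsymbol{p}_n^*)\ \geq\ \bar{u}_n(a^*,\boldsymbol{p}_{-n}^*),
\end{equation}
where $H(\boldsymbol{p}_n^*)=-\sum_{a}p_n^*(a)\ln p_n^*(a)$. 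Since the expected utility is linear in the deviating player's own mixing, $\bar{u}_n(a^*,\boldsymbol{p}_{-n}^*)$ is also the best \emph{mixed} deviation, so the residual term $\frac{1}{\kappa_n}H(\boldsymbol{p}_n^*)$ is precisely an upper bound on the payoff gain that any unilateral deviation of SBS $n$ can achieve.

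Finally I would bound the entropy. Because the SBMMG action set $S=\{c,d\}$ has cardinality two, $H(\boldsymbol{p}_n^*)\leq\ln 2$, with equality only at the uniform mix. Hence no deviation can improve the payoff of SBS $n$ by more than $\frac{1}{\kappa_n}\ln 2$, which is exactly the $\epsilon$-equilibrium condition with $\epsilon=\frac{1}{\kappa_n}\ln|S|=\frac{1}{\kappa_n}\ln 2$, i.e.\ $\frac{1}{\kappa_n}$ times the logarithm of the number of available actions (the $\ln(1)$ in the statement being a typographical slip for $\ln 2$). The only genuinely non-routine step is the variational identity of the first paragraph; everything afterward reduces to a one-line comparison of $\Phi_n$-values together with the elementary entropy bound, so I expect the Lagrangian characterization of the logit response, and the argument for its uniqueness via strict concavity, to be the crux of the proof.
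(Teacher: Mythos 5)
Your proof is correct and takes essentially the same route as the paper: both rest on the variational characterization of the Boltzmann--Gibbs response as the maximizer of an entropy-regularized expected utility, followed by the elementary bound $H\leq\ln 2$ for a two-action mixed strategy, yielding $\epsilon=\frac{1}{\kappa_n}\ln 2$ (and you are right that the $\ln(1)$ in the statement is a typographical slip --- the paper's own proof ends with $\frac{1}{\kappa_n}\log(2)$). The only differences are presentational: you derive the variational identity yourself via a Lagrangian argument and compare $\Phi_n$ at the BGE against the degenerate distribution on the best pure reply, whereas the paper imports the characterization from a cited reference and bounds the deviation gain by an entropy difference relative to the uniform distribution.
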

\begin{proof}
This result follows directly from \cite{young2004strategic} based on the definition of the best response $\boldsymbol{\beta}_n^{(\kappa_n)}(\boldsymbol{p}^*_{-n})$ which writes as follows:
\begin{equation*}
\begin{array}{l}
\boldsymbol{\beta}_n^{(\kappa_n)}(\boldsymbol{p}^*_{-n})=\text{arg} \max_{\boldsymbol{p}_{-n}\in[0,1]^2}\Big[\bar{u}_n(\boldsymbol{p}_n,\boldsymbol{p}_{-n}^*)-H(\boldsymbol{\beta}_n^{(\kappa_n)}(\boldsymbol{p}^*_{-n}))\Big],
\end{array}
\end{equation*}
where $H(\boldsymbol{\beta}_n^{(\kappa_n)}(\boldsymbol{p}^*_{-n}))=\frac{1}{\kappa_n}\Big(p_n^*\text{log}(p_n^*)+(1-p_n^*)\text{log}(1-p_n^*)\Big)$, and the following property of the entropy function $H(\boldsymbol{\beta}_n^{(\kappa_n)}(\boldsymbol{p}^*_{-n}))$: 
\begin{equation*}
H(\boldsymbol{\beta}_n^{(\kappa_n)}(\boldsymbol{p}^*_{-n}))-H(\boldsymbol{p}_{-n})\leq H(\boldsymbol{p}_n^0),
\end{equation*}
where, $\boldsymbol{p}_n^0=(\frac{1}{2},\frac{1}{2})$ is the initial uniform distribution over the set of strategies. Thus, we can deduce the following:
\begin{equation*}
\begin{aligned}
\bar{u}_n(\boldsymbol{p_k},\boldsymbol{p^*_{-k}})-\bar{u}_n(\boldsymbol{\beta}_n^{(\kappa_n)}(\boldsymbol{p}^*_{-n}),\boldsymbol{p^*_{-k}})
\leq \frac{1}{\kappa_n}\Big( H(\boldsymbol{\beta}_n^{(\kappa_n)}(\boldsymbol{p}^*_{-n}))-H(\boldsymbol{p}_{n})\Big)&\leq -\frac{1}{\kappa_n}\text{log}(\frac{1}{2}),\\
&\leq \frac{1}{\kappa_k}\text{log}(2).
\end{aligned}
\end{equation*}
\end{proof}
The BGE is equal to the fair PMNE when $\epsilon=0$. Thus, by putting $\kappa_n \to \infty$ while satisfying the condition given in the Appendix, we can make sure to approach the fair PMNE.
\section{Simulation Results and Analysis}
\label{sim}
For our simulations, we consider a $2$~km $\times\ 2 $~km area which is covered by two MBSs and five SBSs. The SBSs are connected to the MBSs via a heterogeneous backhaul having a total capacity of $1$ Gbps unless stated otherwise. In order to ensure fairness between the SBSs, we always set the minimum number of backhaul resource blocks equal to the number of SBSs. Then, without loss of generality, we use a matching algorithm similar to the one proposed in \cite{semiari2015matching}, for allocating backhaul capacity. The total number of predicted files is set to 150 randomly distributed over the SBSs. All statistical results are averaged over 100 independent runs. We assume that the channel is static over a large number of sub-slots knowing that the expected utility is only averaged over the possible strategies. Based on the required number of iterations for the convergence of the algorithm in Fig \ref{fig:conv1}, we can deduce that such an assumption is reasonable for our model as the channel is known to be static for 10000 iterations in practical 4G systems. However, to avoid the overhead due to the frequent cache updates, the algorithm can be run over both strategies and channel statistics. For instance, to operate over slower time scales, the utility function can be redefined as the expectation over all the channel realizations over a given time duration $\mathbbm{E}_{t}{[\tilde{u}(a_n(t),h(t))}]$.

Once the utilities defined based on the backhaul allocation algorithm, we use the proposed RL algorithm, abbreviated by BMRL, with $\alpha_n(t) = \frac{1}{t}$ and $\lambda_n(t)=\frac{1}{t^2}$, to reach the BGE. The algorithm is run until convergence for different configurations. To show the performance gain of the proposed framework, we compare the decentralized BMRL with a centralized greedy algorithm (CGA). In CGA, a central entity receives information from all the SBSs regarding the number of their current and predicted requests. It is also aware of the capacity of the backhaul links. At each iteration, the central entity allows some SBSs to download a fixed number of predicted files. The chosen SBSs at each iteration are determined based on a fairness rule, i.e., the SBSs that have the lowest number of downloaded predicted files are selected. We also compare the BMRL with an ideal and optimal centralized algorithm (OCA) which is similar to CGA in which there is no information exchange between the SBSs and the central entity. The CGA is optimal as it never exceeds the capacity of the backhaul and guarantees fairness between the SBSs. Even though, CGA is not realistic or practical, it allows benchmarking the proposed approach against an ideal and optimal scheme. 

In our simulations, we use the backhaul allocation algorithm proposed in [3] for all scenarios and the comparison between the different approaches is with respect to the impact of downloading predicted files on the served urgent requests by the SBSs when using the three different algorithms. In this sense, the ``optimal'' centralized algorithm is considered as optimal because it has complete knowledge of the actions of the SBSs and their local information without accounting for the impact of the used algorithm in [3]. Having such information, the centralized entity is able to determine the global optimal solution given the outcome of the algorithm in [3].

In Fig.~\ref{fig:ut1}, we assess the impact of the parameter $\kappa_n$ on the achievable utility by the SBSs in the BMRL. Fig.~\ref{fig:ut1} shows the the variation of the difference between the available backhaul capacity and the required capacity for serving all the predicted files that are requested by the SBSs, while increasing the number of predicted files in the network. Note that the available backhaul capacity is the same for all the configurations while the number of SBSs having predicted files is being increased.
\begin{figure}
\centering
\includegraphics[scale=0.5]{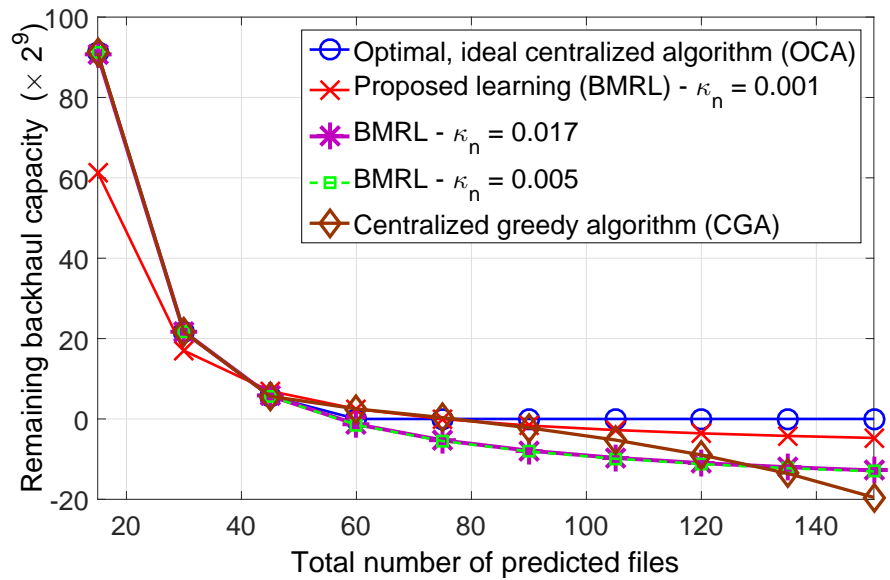}
\caption{Difference between the available backhaul capacity and the required capacity for serving the requested files with respect to the number of files and $\kappa_n$.}
\label{fig:ut1}
\end{figure}
By increasing the number of files in the network, the allocated backhaul to each SBS decreases resulting in a decreasing average utility. In Fig.~\ref{fig:ut1}, we can see that the parameter $\kappa_n$ has a significant impact on the performance of the proposed algorithm. In fact, when the value of $\kappa_n$ is high, the SBSs tend to choose more frequently the actions that are best responses to the actions of the other SBSs. Hence, when the backhaul capacity allows to serve all the predicted files, i.e., the total number of files is up to 60 in Fig.~\ref{fig:ut1}, higher values of $\kappa_n$ ($\kappa_n=0.005$ and $\kappa_n=0.017$) allow the BMRL to achieve the same performance as the OCA. In this case, BMRL is also as good as CGA due to the available backhaul that can support the extra exchanged packets in CGA. In contrast, decreasing the value of $\kappa_n$ will lead the SBSs to play all the actions uniformly. In this case, the performance of the BMRL is close to OCA and much higher than CGA when the predicted files cannot all be served through the backhaul. In fact, by increasing the number of files, more backhaul is allocated for the information exchange resulting in a decreasing performance in CGA. Thus, by properly choosing the values of the parameter $\kappa_n$, we can achieve optimal performance at a lower signaling overhead compared to CGA. 

In Fig.~\ref{fig:ut2}, the performance of the BMRL is compared with OCA in three different cases:
\begin{itemize}
\item \emph{Case 1:} The available backhaul capacity is higher than the required capacity to serve the current requests, but the extra backhaul capacity can only be used to serve up to 60 predicted files.
\item \emph{ Case 2:} The backhaul capacity (50 Mbps) is lower than the required capacity to serve the current requests.
\item \emph{ Case 3:} The backhaul capacity (3 Gbps) is sufficient to serve all the current requests and up to 150 predicted files.
\end{itemize}

For cases 2 and 3, choosing relatively high values of $\kappa_n$ allows one to achieve exactly the same performance as the OCA. In Case 2, by choosing $\kappa_n=0.001$, the SBSs download all their predicted files with a high probability approaching $p_n=1$. On the other hand, in case 3, by putting $\kappa_n=1$, none of the SBSs requests a predicted file and the probability of requesting a predicted file approaches $p_n=0$. Finally, in case 1, when the capacity of the backhaul is not sufficient for serving all the predicted files, $\kappa_n$ should be chosen carefully depending on the backhaul capacity and the approximate total number of files in the network. For this case, $\kappa_n=0.001$ according to Fig.~$\ref{fig:ut1}$.

\begin{figure}
\centering
\includegraphics[scale=0.5]{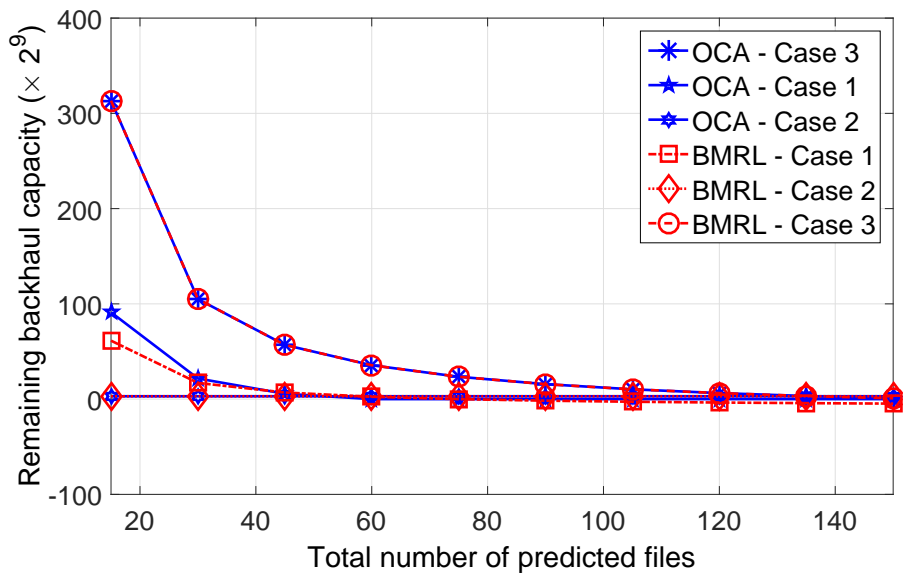}\vspace{-0.3cm}
\caption{The difference between the available backhaul capacity and the required capacity for serving the predicted files that are requested by the SBSs with respect to the number of files.}\vspace{-0.6cm}
\label{fig:ut2}
\end{figure}

Fig.~\ref{fig:file} shows the total amount of predicted data that is requested by the SBSs in BMRL, OCA, CGA and a centralized random fair algorithm (RFA) in which neither the capacity of the backhaul nor the requests profile of the SBSs are known to the central entity. We can observe that the backhaul usage in BMRL is similar to the backhaul usage in OCA in 85 \% of the cases while it outperforms CGA and RFA by up to 50\% in terms of the amount of cached content and the rate with which the current requests are served, respectively. In fact, when the available backhaul capacity is higher than the total number of predicted files in the network (up to 60 predicted files), in CGA, BMRL and OCA all the predicted files are requested. The backhaul usage in the RFA is lower compared to the other algorithms, because the capacity of the backhaul is selected randomly at each iteration and since the backhaul is allocated fairly to the SBSs, each SBS downloads the predicted file with probability $p_n=0.5$. This results in an inefficient backhaul usage whether the backhaul is available or not. When the capacity of the backhaul is not sufficient to support all the predicted files, the amount of downloaded content in CGA decreases by increasing the number of files in the network which is due to the extra packets that are transmitted over the backhaul for coordination with the central entity. In this same case, both OCA and BMRL allow the SBSs to download files without exceeding the capacity of the backhaul. Note that in Fig.~\ref{fig:file}, the values of $\kappa_n$ were chosen based on the maximum amount of files that can be downloaded without exceeding the capacity of the backhaul.
\begin{figure}
\centering
\includegraphics[scale=0.5]{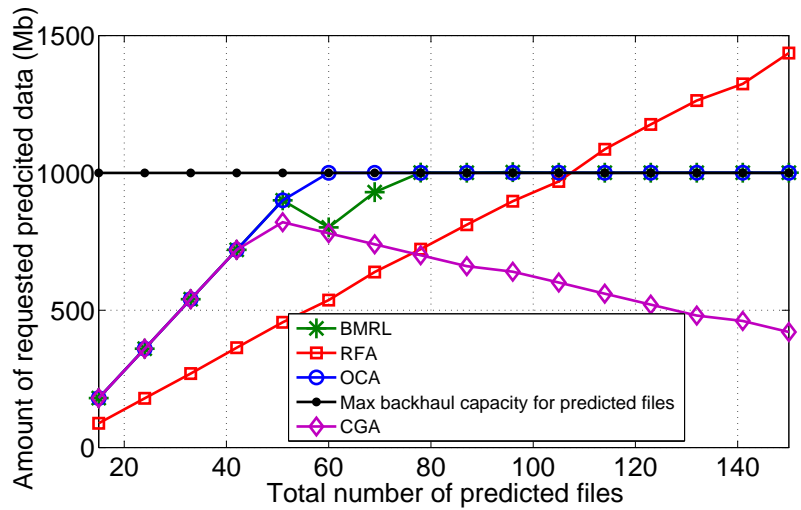}
\caption{Amount of requested predicted data with respect to the backhaul capacity.}
\label{fig:file}
\end{figure}

Fig.~\ref{fig:conv1} presents the number of iterations needed for convergence to the BGE. Fig.~\ref{fig:conv1} shows that the BMRL requires only 2 iterations to converge when no backhaul is available for serving the predicted requests (Case 2). Around $532$ iterations, are needed when there is an available backhaul for downloading all the predicted files. Fig. ~\ref{fig:conv1} shows that, for Case 3, the number of iterations decreases with the number of predicted files until reaching 111 iterations. In Case 1, the number of iterations decreases in the beginning since all the files can be served (similarly to Case 3) and the number of iterations increases again when it is not possible to serve all the files due to the SBSs' probabilities of using the backhaul that variate largely before reaching the equilibrium. The number of iterations can be considered as quite reasonable compared to other works that use RL approaches for SCNs such as \cite{bennis2013self}, in which the number of iterations that is required for convergence exceeds 5000. Moreover, given that the duration of a single time slot is around 1 ms and the mean required time for one iteration is 100 ns, the number of iterations that is required for the convergence of the algorithm is acceptable for the game we consider.

\begin{figure}
\centering
\includegraphics[scale=0.55]{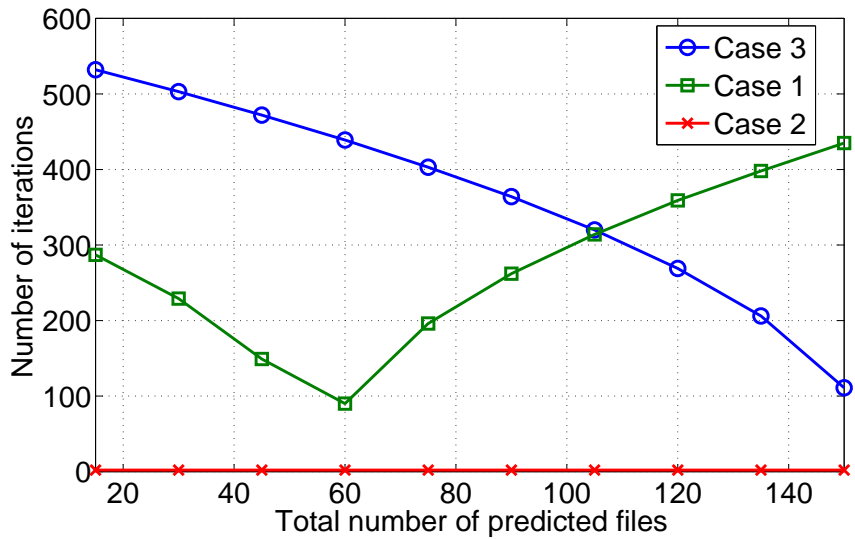}
\caption{Number of iterations for convergence to the logit equilibrium.}
\label{fig:conv1}
\end{figure}

\section{Conclusion}
\label{conclusion}
In this paper, we have proposed a novel backhaul management approach for cache-enabled small cell networks, while taking into account the heterogeneity of the backhaul links. We divided the requests of each SBS into predicted requested that can be served locally by the SBSs and current requests that must be served instantaneously from the backhaul. We have formulated a minority game in which the SBSs have to define the number of predicted files to download at each time period without deteriorating the QoS of the current requests. We have proved the existence of multiple pure Nash equilibria and the existence of a unique fair mixed Nash equilibrium allowing all the SBSs to use the backhaul evenly. Moreover, we have proposed a self-organizing reinforcement learning algorithm that reaches a unique Boltzmann-Gibbs equilibrium that approximates the PMNE. Simulation results have shown that the proposed algorithm outperforms the centralized greedy algorithm and its performance is exactly the same as the performance of the ideal and optimal algorithm in more than 85 \% of the cases. The impact of the caching and replacement policies at the SBSs, on the number of predicted files as well as the whole backhaul management approach, is left for future work.

\bibliographystyle{IEEEtran}
\bibliography{references}

\begin{appendix}

\section{Proof of Theorem 1}
The proof of the convergence of the algorithm in (26) to a unique BGE can be decomposed into two parts. We first start by proving that the algorithm converges surely to a BGE and then prove that the algorithm admits a unique steady point which corresponds to the unique BGE. 

Since a minority game is a special case of congestion games and based on \cite[Theorem 3.1]{monderer1996potential}, we can deduce that the formulated SBMMG admits a potential function. On the other hand, based on \cite[Theorem 7]{leslie2003convergent} and knowing that the SBMMG accepts a finite number of BGEs, then the algorithm in (26) admits at least one steady point and converges with probability 1 to a BGE.

In order to prove the uniqueness of the steady point of the algorithm in (26) we analyze the Robin-Monro iteration form of (26) \cite{benaim1999dynamics}. The limiting ordinal differential equations (ODE) of the Robin-Monro equations write as follows:
\begin{equation}
\left\{
\begin{array}{l}
  \dot{\hat{u}}_n(a,t)=\mathbbm{E}_{\boldsymbol{p}}\Big[\widetilde{u}_n(a(t))\Big]-\hat{u}_n(a,t),\\
\dot{p}_n(t)= \widetilde{\beta}^{(\kappa_n)}_n(\boldsymbol{\hat{u}}_n(t))-p_n(t-1),
\end{array}
\right.
\end{equation}
Given the existence of at least one fixed point for the ODE function:
\begin{equation}
\dot{\boldsymbol{p}}_n=\boldsymbol{\beta}_n^{(\kappa_n)}(\boldsymbol{p}_{-n})-\boldsymbol{p}_n,
\end{equation}
then, we have:
\begin{equation}
\boldsymbol{p}_n^*=\boldsymbol{\beta}_n^{(\kappa_n)}(\boldsymbol{p}_{n}^*),
\end{equation}
and by replacing with (22) and (32) in $\dot{\hat{u}}_n(a,t)$, we get:
\begin{equation}
  \dot{\hat{u}}_n(a,t)=\mathbbm{E}_{\boldsymbol{p}^*}\Big[u_n(a(t),\boldsymbol{p}_{-n}^*)\Big]+\mathbbm{E}_{\boldsymbol{p}^*}\Big[\epsilon_{n,a_n(t)}(t)\Big]-\hat{u}_n(a,t),
\end{equation}
which reduces to solving the ODE:
\begin{equation}
  \dot{\hat{u}}_n(a,t)=u_n(a(t),\boldsymbol{p}_{-n}^*)-\hat{u}_n(a,t),
\end{equation}
Now, we prove the existence of a unique fixed point for the ODE (34). Given Banach fixed point theorem which says that a contraction has a unique fixed point, it is sufficient to prove that the ODE in (34) is a contraction in order to prove the uniqueness of the fixed point $u_n(a(t),\boldsymbol{p}_{-n}^*)$ given by:
\begin{equation} 
\bar{u}_n(c,\boldsymbol{p}_{-n})=\sum_{k=0}^{G-1}\dbinom{G-1}{k} p^{k}(1-p)^{G-(k+1)}u_n(c,k+1),
\end{equation}
with $\boldsymbol{p}_n=\boldsymbol{\beta}_n^{(\kappa_n)}(\boldsymbol{p}_{-n}).$
\begin{definition}{(Contraction):}
A map function $g:X\to X$ is said to be a $\theta$-contraction if $\exists~~ 0<\theta<1$ such that:
\begin{equation}
|g(x_1,x_2)|\leq \theta |x_1-x_2|.
\end{equation}
\end{definition}
Consider the difference $|\bar{u}_n(c,\boldsymbol{p}_{-n})-\bar{u}_n(c,\boldsymbol{p}^{\prime}_{-n})|$, we have:
\begin{equation}
\begin{array}{l}
|\bar{u}_n(c,\boldsymbol{p}_{-n})-\bar{u}_n(c,\boldsymbol{p}^{\prime}_{-n})|=\\
 |p_n\sum_{k=0}^{G-1}\dbinom{G-1}{k} p^{k}(1-p)^{G-(k+1)}u_n(c,k+1)-\\
 p^{\prime}_n\sum_{k=0}^{G-1}\dbinom{G-1}{k} (p^{\prime})^{k}(1-p^{\prime})^{G-(k+1)}u_n(c,k+1)|=\\
\Big |\sum_{k=0}^{G-1}u_n(c,k+1)(p_n-p^{\prime}_n)\dbinom{G-1}{k} (p^{\prime})^{k}(1-p^{\prime})^{G-(k+1)}\Big|\\
 \leq |(p_n-p^{\prime}_n)\sum_{k=0}^{G-1}u_n(c,k+1)|,\\
 \leq |\sum_{k=0}^{G-1}u_n(c,k+1)||p_n-p^{\prime}_n|.
\end{array}
\end{equation}
By replacing with the best response functions we get:
\begin{equation}
\begin{array}{l}
|p_n-p^{\prime}_n|=|\beta_{n}^{(\kappa_n)}(c,\boldsymbol{p}_{-n})-\beta_{n}^{(\kappa_n)}(c,\boldsymbol{p}_{-n})|\\
=\Big|\frac{\exp\Big(\kappa_{n}u_n(a,\boldsymbol{p}_{-n})\Big)}{\exp\Big(\kappa_{n}u_n(c,\boldsymbol{p}_{-n})\Big)+\exp\Big(\kappa_{n}u_n(d,\boldsymbol{p}_{-n})\Big)}
-\frac{\exp\Big(\kappa_{n}u_n(a,\boldsymbol{p}_{-n})\Big)}{\exp\Big(\kappa_{n}u_n(c,\boldsymbol{p}_{-n})\Big)+\exp\Big(\kappa_{n}u_n(d,\boldsymbol{p}_{-n})\Big)}\Big|
\end{array}
\end{equation}
After some numerical computation we get:
\begin{equation}
\begin{array}{l}
|p_n-p^{\prime}_n|\leq \kappa_n||\boldsymbol{p}-\boldsymbol{p}^{\prime}||_{\infty}
\end{array}
\end{equation}
by replacing in (37) we have:
\begin{equation*}
|\bar{u}_n(c,\boldsymbol{p}_{-n})-\bar{u}_n(c,\boldsymbol{p}^{\prime}_{-n})|\leq \kappa_n|\sum_{k=0}^{G-1}u_n(c,k+1)|||\boldsymbol{p}-\boldsymbol{p}^{\prime}||_{\infty}.
\end{equation*}
Since we have $u_n(c,\boldsymbol{p}_{-n})u_n(d,\boldsymbol{p}_{-n})\leq0$, we can conclude that $u_n(\boldsymbol{p})$ is an $\infty$-contraction and admits a unique fixed point if $\kappa_n\leq |\sum_{k=0}^{G-1}u_n(c,k+1)|$.
Following the results from stochastic approximation algorithms and considering the Lyapunov function $V(\boldsymbol{p})=||\boldsymbol{p}-\boldsymbol{p}^*||_{\infty}$ for ODE (31), we deduce that $\boldsymbol{p}^*$ is the unique globally asymptotically stable point of (31). Thus, the formulated SBMMG admits a unique fixed point which is the BGE of the game. 
\end{appendix}

\end{document}